\newcolumntype{L}{>{\raggedright\arraybackslash}X}
\newtheorem{theorem}{Theorem}
\newtheorem{lemma}[theorem]{Lemma}
\newtheorem{cor}[theorem]{Corollary}
\tikzset{
	point/.style={draw, circle, fill=black, inner sep=0mm, minimum size=6pt},
	rp/.style={point, red!70!yellow},
	bp/.style={point, blue!50!black},
	cluster/.style={draw, gray!40!white, fill=gray!10!white, very thick},
	fairlet/.style={draw, gray!80!white, fill=gray!20!white, rounded corners, very thick}
}
\title{Improved Bounds for the Traveling Salesman Problem with Neighborhoods on Uniform Disks}
\author{}
\author{Ioana O. Bercea\thanks{Department of Computer Science, University of Maryland, College Park, USA}}
\date{}
\begin{document}
\pagenumbering{Alph}
\begin{titlepage}
\maketitle
\thispagestyle{empty}
			
	\begin{abstract}
		Given a set of $n$ disks of radius $R$ in the Euclidean plane, the Traveling Salesman Problem With Neighborhoods (TSPN) on uniform disks asks for the shortest tour that visits all of the disks. The problem is a generalization of the classical Traveling Salesman Problem(TSP) on points and has been widely studied in the literature. For the case of disjoint uniform disks of radius $R$, Dumitrescu and Mitchell~\cite{dumitrescu2003approximation} show that the optimal TSP tour on the centers of the disks is a $3.547$-approximation to the TSPN version. The core of their analysis is based on bounding the detour that the optimal TSPN tour has to make in order to visit the centers of each disk and shows that it is at most $2Rn$ in the worst case. H\"{a}me, Hyyti\"{a} and Hakula~\cite{hame-eurocg-2011} asked whether this bound is tight when $R$ is small and conjectured that it is at most $\sqrt{3}Rn$. 
		
		We further investigate this question and derive structural properties of the optimal TSPN tour to describe the cases in which the bound is smaller than $2Rn$. Specifically, we show that if the optimal TSPN tour is not a straight line, at least one of the following is guaranteed to be true: the bound is smaller than $1.999Rn$ or the $TSP$ on the centers is a $2$-approximation. The latter bound of $2$ is the best that we can get in general. Our framework is based on using the optimality of the TSPN tour to identify local structures for which the detour is large and then using their geometry to derive better lower bounds on the length of the TSPN tour. This leads to an improved approximation factor of $3.53$ for disjoint uniform disks and $6.728$ for the general case. We further show that the H\"{a}me, Hyyti\"{a} and Hakula conjecture is true for the case of three disks and discuss the method used to obtain it.
	\end{abstract}
	
\end{titlepage}
\pagenumbering{arabic}

\section{Introduction}
We study the Traveling Salesman Problem with Neighborhoods (TSPN) when each neighborhood is a disk of fixed radius $R$. The problem is a generalization of the classical Euclidean Traveling Salesman Problem (TSP), when each point to be visited is replaced with a region (interchangeably, a neighborhood) and the objective is to compute a tour of minimum length that visits at least one point from each of these regions. While it is known that Euclidean TSP admits a Polynomial Time Approximation Scheme (PTAS) due to the celebrated results of Arora~\cite{arora1998polynomial} and Mitchell~\cite{mitchell1999guillotine}, Euclidean TSPN has been shown in fact to be APX-hard~\cite{safra2006complexity,de2005tsp} even for line segments of comparable length~\cite{elbassioni2009approximation}. The geometric version of TSPN was first studied by Arkin and Hassin~\cite{arkin1994approximation} who gave constant factor approximations for a variety of cases. Since then, there has been a wide ranging study of TSPN for different types of regions. In the case of connected regions, there is a series of $O(\log n)$ approximations~\cite{mata1995approximation, elbassioni2009approximation,gudmundsson1999fast}. Better approximations are known for cases that consider various restrictions on the regions such as comparable sizes (i.e. diameter), fatness (ratio between the smallest circumscribing radius and largest inscribing radius, or how well can a disk approximate the region) and  pairwise disjointness or limited intersection~\cite{dumitrescu2003approximation,de2005tsp,elbassioni2009approximation,mitchell2010constant,chan2011qptas,dumitrescu2016traveling, bodlaender2009minimum,mitchell2007ptas,chan2018reducing}. We refer the reader to ~\cite{handbook} for a comprehensive survey of the results. 

We study the disk version which models the situation in which each customer is willing to travel a distance $R$ to meet the salesperson. This is considered an important special case of the general TSPN~\cite{dumitrescu2016traveling} and is especially relevant since it has found applicability in other areas such as path planning algorithms for coverage with a circular field of view~\cite{arkin2000approximation,galceran2013survey} and most recently for data collection in wireless sensor networks~\cite{yuan2007optimal, ma2013tour, citovsky2015exact}. Various heuristics~\cite{yuan2007optimal, carrabs2017novel,chang2015artificial,liu2013path}  and variations ~\cite{bhadauria2011robotic, ma2013tour,tekdas2012efficient} have been considered, all of which have as their basis the TSPN on uniform disks problem.

Dumitrescu and Mitchell~\cite{dumitrescu2003approximation} were the first to specifically address the case of uniform disks in 2001. They showed a PTAS for disjoint unit disks and simpler constant factor approximations for the disjoint and overlapping cases. The specific factor of $3.547$ for disjoint disks is relative to using a routine for TSP on points (i.e. the actual constant depends on the subroutine used). Later,  Dumitrescu and T\'{o}th~\cite{dumitrescu2016traveling} improved the constant factor in~\cite{dumitrescu2003approximation} for the overlapping case and extended it to unit balls in $\mathbb{R}^d$, giving a $O(7.73^d)$-approximation. When the balls are disjoint, Elbassioni et al.~\cite{elbassioni2009approximation} showed a $O(2^d/\sqrt{d})$-approximation. Most recently, Dumitrescu and T\'{o}th~\cite{dumitrescu2017constant} gave a general constant factor for (potentially overlapping) disks of arbitrary radii.  As noted by the authors, while the complexity of the disk case is well understood generally, the question of obtaining practical and better constant factor approximations remains of high  interest \cite{dumitrescu2016traveling}.

In this paper, we aim for an improved constant factor algorithm for the case of uniform radius disks and note that the algorithm proposed by Dumitrescu and Mitchell~\cite{dumitrescu2003approximation} for the disjoint case outputs an approximate TSP tour on the \textit{centers} of each disk.  The core of their analysis is a bound that compares the length of the optimal TSP on the centers of each disk($|TSP^*|$) with the length of the optimal TSPN on the disks ($|TSPN^*|$) and says that 
\begin{equation}\label{bound}
|TSP^*|\leq |TSPN^*|+2Rn, 
\end{equation} 
where $n\geq 2$ is the number of disks in the instance. In addition, the authors use a packing argument to lower bound the length of the optimal TSPN tour in terms of $R$ and $n$ and get that $ \frac{\pi}{4}Rn - \pi R  \leq |TSPN^*| $. Overall, this gives a $3.547$- approximation and in addition, the authors show that the algorithm cannot give better than a factor $2$ approximation. While other methods for choosing representative points can be employed~\cite{dumitrescu2017constant,arkin1994approximation,dumitrescu2003approximation}, this approach is appealing both in its elegance and because it does not depend on $R$. Moreover, other existing constant factors approximations for TSPN often hide large constants~\cite{de2005tsp, elbassioni2009approximation, dumitrescu2017constant} that are incurred as a consequence of using general bounds on the length of the optimal tour that do not directly exploit the structure of the regions or of the optimal TSPN tour (bounding rectangle argument and Combination Lemma in ~\cite{arkin1994approximation}). In order to improve on them, the challenge then becomes to develop bounds that exploit the difference in behavior between  a TSP tour (on points) and the TSPN tour on the regions and furthermore, avoid using general purpose techniques that add on to the overall approximation factor. 

In this context, one way to improve the approximation factor for disjoint disks is to better understand the relationship between the optimal TSP tour on the centers and the optimal TSPN on the disks. Specifically, is the $2Rn$ term in (\ref{bound}) tight or can it be improved by using specific structural properties of the optimal TSPN? A similar question was asked in 2011 by H\"{a}me, Hyyti\"{a} and Hakula~\cite{hame-eurocg-2011} for the case when $R$ is very small (and hence, $TSP^*$ and $TSPN^*$ respect the same order and the disks are pairwise disjoint). They conjectured that the true detour term should be $\sqrt{3}nR$ and constructed arbitrarily large instances of disjoint disks that converge to this case. We refer to this as the \textbf{H\"{a}me, Hyyti\"{a} and Hakula conjecture.} Subsequent experiments by M\"{u}ller~\cite{muller2011finding}, however, suggest that this might be true only for tours up to five disks and higher otherwise. No further progress has been made towards the conjecture since then.

\paragraph*{Contributions.}
We make the first progress on the conjecture and develop a twofold method that either improves the bound in (\ref{bound}) or 
shows that the TSP on the centers is a good approximation for the TSPN on the disks. Formally, we get that

\begin{theorem}\label{thm:mama}
	For any $n \geq 4$ disjoint disks of radius $R$ at least one of the following is true:
	\begin{itemize}
		\item $TSPN^*$ is supported by a straight line,
		\item $|TSP^*| \leq |TSPN^*| + 1.999Rn$,
		\item $|TSP^*| \leq 2\cdot |TSPN^*|$.
	\end{itemize} 
	Our framework also gives an overall $3.53$-approximation for the case of uniform disjoint disks and a $6.728$-approximation for the overlapping case.
\end{theorem}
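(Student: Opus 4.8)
The plan is to establish a trichotomy by analyzing the *detour* that the optimal TSPN tour must make to pass through the centers of all disks. Fix an optimal TSPN tour $T^*$ and let $p_1, p_2, \dots, p_n$ be the visiting points on the disks in tour order, with $c_1, \dots, c_n$ the corresponding centers. The quantity $|TSP^*| - |TSPN^*|$ is controlled by the total detour $\sum_i \bigl(\dist(p_{i-1}, c_i) + \dist(c_i, p_{i+1}) - \dist(p_{i-1}, p_{i+1})\bigr)$, since routing the tour through each center rather than through $p_i$ is one feasible (though not necessarily optimal) TSP tour on the centers, and hence upper-bounds $|TSP^*|$. First I would make this reduction precise: the local detour at disk $i$ is a function of the incoming and outgoing edge directions at $p_i$ together with the offset vector $p_i - c_i$, whose length is at most $R$. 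The worst-case local detour of $2R$ is attained only when both edges are (anti)parallel and $p_i$ sits diametrically opposite the direction of travel, i.e. when the tour ``doubles back'' through the disk.

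The key structural idea is to use the optimality of $T^*$ to rule out these worst-case configurations everywhere simultaneously. A tour that is locally near-worst-case at every disk is highly constrained: if nearly every edge is (anti)parallel, the tour is essentially supported by a single line, which is exactly the first alternative in the theorem. So the strategy is a dichotomy refined into the stated trichotomy. I would partition the disks into those where the local detour is bounded away from $2R$ (say by some $2R - \varepsilon$) and those where it is close to $2R$. If a constant fraction of disks lie in the first class, I can aggregate the savings to obtain the total bound $|TSP^*| \leq |TSPN^*| + 1.999 R n$, which is the second alternative. If instead almost all disks are in the near-worst-case class, the geometric constraint forces the tour to be nearly straight; here I would invoke a lower bound on $|TSPN^*|$ that holds when the tour is elongated along one direction. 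For a nearly-straight tour of $n$ disjoint disks, the disks must be spread along the line so that $|TSPN^*|$ is $\Omega(Rn)$, and comparing this lower bound against the at-most-$2Rn$ detour yields $|TSP^*| \le |TSPN^*| + 2Rn \le 2\,|TSPN^*|$, the third alternative. The hypothesis $n \geq 4$ is what makes the counting argument have enough slack to push $2R$ down to $1.999R$ in the non-degenerate case.

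The main obstacle I anticipate is making the ``near-worst-case implies nearly straight'' step quantitatively robust: it is not enough that each \emph{individual} edge is nearly parallel to a reference direction, because small per-edge angular deviations can accumulate over the tour, and because a tour can double back through a disk without the global tour being straight (think of a long thin ``zig-zag''). To handle this I would work with a carefully chosen potential or charging scheme that converts a lower bound on the detour into either a lower bound on the accumulated turning/angular budget of $T^*$ or a lower bound on its projected length along a fixed axis. Bounding the angular budget is delicate because $T^*$ is a closed tour with total turning $2\pi$ only in the convex case; for nonconvex optimal tours the turning can be larger, so I would instead favor the projected-length route, arguing that edges which are all nearly parallel to a common direction force the disk centers into a narrow slab, and then use disjointness (packing) to bound $|TSPN^*|$ from below by the spread of the centers inside that slab. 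The second obstacle is the aggregation constant: getting from a qualitative ``$\varepsilon$ saving on a positive fraction of disks'' to the precise numerical target $1.999Rn$ (and ultimately the approximation factors $3.53$ and $6.728$) requires tracking the interaction between the per-disk $\varepsilon$, the fraction of good disks, and the packing lower bound $\tfrac{\pi}{4}Rn - \pi R \le |TSPN^*|$ stated in the introduction; I expect this to reduce to an optimization over the threshold parameters that I would tune at the end rather than guess up front.
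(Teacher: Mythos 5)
Your high-level plan (split the tour into low-detour and high-detour parts, aggregate the savings for the $1.999Rn$ bound, and invoke a lower bound on $|TSPN^*|$ to get the $2$-approximation otherwise) matches the paper's skeleton, but the step carrying all the weight --- ``if almost all disks are near-worst-case then the tour is nearly straight'' --- is false, and the paper's main technical content exists precisely because it is false. A near-$2R$ detour at a disk only forces the two incident edges to double back on each other \emph{locally}; it does not force them to align with any common global direction. An optimal tour can make many hairpin turns pointing in different directions (the paper even remarks that one can take any optimal tour with interior angles at most $\pi/6$ and plant an extra disk at each sharp turn, preserving optimality and disjointness). For such tours your slab/projected-length fix collapses: the centers are not confined to a narrow slab, so packing gives nothing, and the tour is nowhere near a line, so the first alternative does not apply either. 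The paper resolves exactly this configuration with machinery your proposal has no counterpart for: two consecutive ``bad'' (near-locally-optimal) edges force an angle $\leq 2\beta$ at the middle center (Corollary~\ref{cor:angle}); if the two disks involved are close, i.e. $|O_1O_2|\leq R/\sin(2\beta)$, then the second edge recrosses the first disk (Theorem~\ref{thm:intersect}), and optimality of $TSPN^*$ forces a degenerate collinear local structure, the $\beta$-triad (Theorem~\ref{thm:main}); these triads are edge-disjoint (Lemma~\ref{lem:betatriad}), and their detour is controlled only \emph{on average}, by comparing $TSP(\sigma)$ against a reordered tour $TSP(\sigma')$ that swaps $O_1$ and $O_2$, using the $n=3$ bound $3\sqrt{3}R$ of Theorem~\ref{thm:triangle} (the H\"{a}me--Hyyti\"{a}--Hakula conjecture for three disks). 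Without this averaging step, a hairpin turn through a nearby disk has short edges, near-$2R$ detour, and no usable lower bound, so neither of your two cases fires on it.

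Separately, the source of the paper's $2$-approximation is different from yours: it is not a packing-in-a-slab bound but a length bound. Bad edges that do \emph{not} form $\beta$-triads must connect disks whose centers are far apart, $|O_1O_2| > R/\sin(2\beta)$, hence each such TSPN edge has length at least $(1/\sin(2\beta)-2)R$; when these edges constitute at least half the tour, $|TSPN^*| \geq (1/(2\sin (2\beta))-1)Rn$, and the crude $2R$-per-edge detour already yields $|TSP^*|\leq 2|TSPN^*|$ for the chosen $\beta$ --- the paper stresses this is a TSPN lower bound that avoids packing arguments altogether. Finally, your vertex-based accounting needs repair: summing the per-disk quantities $\dist(p_{i-1},c_i)+\dist(c_i,p_{i+1})-\dist(p_{i-1},p_{i+1})$ does not telescope against $|TSP^*|-|TSPN^*|$, because consecutive shortcuts share edges; the paper instead charges \emph{edges}, via $|O_iO_{i+1}|\leq |P_iP_{i+1}|+2R$, which is what makes the good/bad dichotomy and the final balancing over $K$ versus $n-K$ go through. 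That last point is fixable bookkeeping; the missing $\beta$-triad analysis is the genuine gap.
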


While the improvement in the overall approximation factor is small, our framework strives to provide new insight into the problem that can be explored further. Specifically, the $2$-approximation (optimal with respect to the method of computing a TSP on the centers~\cite{dumitrescu2003approximation}) comes from the case in which the TSPN tour takes a lot of sharp turns. Furthermore, it is based on a lower bound that does not rely on packing arguments. To the best of our knowledge, this is the first such bound specifically for TSPN out of all arguments for general fat regions~\cite{mitchell2010constant}. As such, it might be of independent interest and it could, for example, lead to improved approximation factors for balls in $\mathbb{R}^d$ that do not depend exponentially on the dimension. Moreover, the fatness of the disks is used in showing that short sharp turns lead to a disk being visited multiple times and can conceivably be used to show similar properties for other fat regions.

We start by fixing an order $\sigma$ and comparing the TSPN tour that visits the disks in that order to the TSP tour that visits their respective centers in the same order. The $2Rn$ term in (\ref{bound}) comes from considering the points at which the TSPN touches the boundary of each disk and charging each such vertex with a $2R$ detour for going to its respective center and coming back. In this view, the $2Rn$ term cannot be improved since the charge on each vertex will always be $2R$. Instead, we reinterpret the bound as charging the \textit{edges} of the TSPN tour instead of its vertices and notice that the charge for each edge can now be anywhere between $-2R$ and $2R$, depending on how close the tour is to (locally) visiting pairs of disks optimally. In this context, we define a ``bad'' edge to be one that incurs a large charge (i.e. $>(2-\epsilon)R$ for some $\epsilon>0$). We show that such bad edges lead to the TSPN tour exhibiting sharp turns (i.e. with small interior angle). When the edges of the sharp turn are long, we use that to derive a better lower bound on the overall TSPN tour length. On the other hand, when one of them is short, we show that the tour must then visit a disk twice (i.e. visit it once, then touch another disk and return back to it). The crux of the argument is in understanding how these short sharp turns that visit a disk multiple times influence the global detour term. 

When a tour visits a disk more than once, two scenarios follow naturally from the classical TSP case of just visiting points: either the order $\sigma$ is not optimal or the tour must follow a straight line. Surprisingly, we show that a third alternative scenario is also possible, whose local structure we call a $\beta$-triad. The main technical contribution of the paper is in describing structural properties of such $\beta$-triads and showing that they actually have a low \textit{average} detour. Specifically, we construct an additional order $\sigma'$ and use an averaging argument to show that $\beta$-triads have low detour when compared to the TSP tours that visit the centers in the order $\sigma$ and $\sigma'$. This then allows us to conclude that they have a low detour with respect to the optimal TSP on the centers.


Along the way, we also show that the H\"{a}me, Hyyti\"{a} and Hakula conjecture is true for $n=3$ and use it to bound the average detour of $\beta$-triads.  We include a discussion of the method used to derive it, involving Fermat-Weber points, which might be useful for the case of $n\geq 4$. We also discuss how our approach can be used within the framework of Dumitrescu and T\'{o}th~\cite{dumitrescu2016traveling} to yield improved approximation factors for the overlapping disks case.

\paragraph*{Preliminaries.}
We consider $n \geq 3$ disjoint disks of radius $R$ in the Euclidean plane.  We denote an optimal TSP tour on the centers of the disks as $TSP^*$. Similarly, $TSPN^*$ will denote an optimal TSPN tour on the disks. Our results will be with respect to a fixed TSPN tour (which we call simply $TSPN$) described by a sequence of ordered points $P_1, P_2, \ldots, P_n$  on the boundary of the disks such that the tour is a polygonal cycle with edges $(P_i, P_{i+1})$. Furthermore, we have that for each of the input disks, there exists some $i \in [1,n] $ such that point $P_i$ is on the boundary of the disk. 

Notice that the points $P_i$ induce a natural order $\sigma$ on the disks with centers $O_1, O_2, \ldots O_n$, i.e. $\sigma$ corresponds to the identity permutation on $P_1, \ldots, P_n$. For the majority of our theorems, we will assume that $TSP$ always refers to a tour \textit{on the centers} and in the order $\sigma$ on the disks. When we need to make a difference, we will further use $TSP(\sigma')$ to be the tour which visits the centers in the order given by the permutation $\sigma'$. Given two such permutations $\sigma$ and $\sigma'$, we say that $\sigma \cap \sigma'$ refers to the maximal set of points on which $\sigma$ and $\sigma'$ agree. In this context, $TSPN(\sigma \cap \sigma')$ refers to the collection of paths we get from visiting the points $P_i$ according to $\sigma \cap \sigma'$. Similarly, $TSP(\sigma \cap \sigma')$ corresponds to the collection of paths that we get from visiting the points $O_i$ according to $\sigma \cap \sigma'$.

Finally, we denote the length of a tour $\mathcal{T}$ as $|\mathcal{T}|$. When $\mathcal{T}$ is a collection of paths, we have that $|\mathcal{T}|$ represents the total length of each of the paths. When $A$ and $B$ are points, we have that $|AB|$ denotes the length of the segment $AB$.  We therefore have that $|TSPN| = \sum_{i=1}^{n} |P_iP_{i+1}|$ and $|TSP| = \sum_{i=1}^{n} |O_iO_{i+1}|$, where $P_{n+1}=P_1$ and $O_{n+1}=O_1$.

	\section{$\beta$-triads and a Structural Theorem}
	\label{43}

Before we formally define what a ``bad'' edge is, we will describe how to interpret the $2Rn$ \textit{detour bound} from ~\cite{dumitrescu2003approximation} as charging edges instead of vertices. We fix an order $\sigma$ and consider the points $P_i$ and $O_i$ as previously defined. The argument in ~\cite{dumitrescu2003approximation} then says that we must have:
\begin{center}
 $\sum_i |O_iO_{i+1}| \leq \sum_i |P_iP_{i+1}| + 2Rn$.
\end{center}

In this context, the term $ \sum_i P_iP_{i+1} + 2Rn$ is the length of a tour that follows the TSPN tour and additionally, at each point $P_i$, takes a detour of $2R$ to visit the center $O_i$ and come back. Choosing $\sigma$ to be the optimal order in which $TSPN^*$ visits the disks gives us (~\ref{bound}). In this view, the detour term $2Rn$ is obtained by charging $2R$ to each point $P_i$ of the TSPN tour. Instead, we can also think of it as coming from charging each \textit{edge} $P_iP_{i+1}$ of the tour with a \textit{local detour} of $2R$ in the following sense:
\begin{center}
$|O_iO_{i+1}| \leq |P_iP_{i+1}| + 2R$.
\end{center}

This new perspective is quite natural since it captures the observation that the shortest edge which visits the disks centered at $O_i$ and $O_{i+1}$ has length exactly $|O_iO_{i+1}| - 2R$ and hence the $TSPN$ tour has to pay at least that for each pair of consecutive disks it visits. In this sense, we decompose the global detour term of $2Rn$ into $n$ local detour terms $|O_iO_{i+1}| - |P_iP_{i+1}|$ that essentially quantify how efficient the TSPN on the disks is locally. 

In this context, saying a TSPN edge has a high local detour is equivalent to saying that it is close to being locally optimal or shortest possible: when the edge is exactly of length $|O_iO_{i+1}| -2R$, its local detour is $2R$ (the maximum). If, on the other hand, we know that the edge is bounded away from $|O_iO_{i+1}| -2R$, i.e. $|P_iP_{i+1}| > |O_iO_{i+1}| -2R + \epsilon R$, for some $\epsilon>0$, this translates into a local detour of at most $(2-\epsilon)R$. Intuitively, such an edge is ``good'' for us because it allows us to lower the overall detour term. In contrast, ``bad'' $P_iP_{i+1}$ edges are the ones for which the local detour term is large and consequently, their length is closer to $|O_iO_{i+1}| -2R$. Our technique is motivated by trying to describe the behavior of such bad edges.

Formally, we consider a fixed angle parameter $\beta \in [0,\pi/12]$ that we instantiate later when we derive the overall bounds. We define the function:
\begin{center}
$f(O_1O_2,\beta) =\sqrt{|O_1O_2|^2 + R^2 - 2R|O_1O_2|\cos \beta}$,
\end{center}
which is $|O_1O_2| - R$ when $\beta =0$ and $|O_1O_2| + R$ when $\beta = \pi$. Intuitively, the quantity $f(O_1O_2,\beta) - R$ will control how close we are to $|O_1O_2| - 2R$.  We then  say that the edge $P_1P_2$ is \textbf{bad} if $|P_1P_2| \leq  f(O_1O_2,\beta)-R$ and \textbf{good} otherwise (we abstract away the dependency on $\beta$ for simplicity). Bad edges are close to $|O_1O_2| - 2R$ and will incur a large local detour. In contrast, using straightforward algebra, one can show that a good edge $P_1P_2$ is guaranteed to have a small detour:  $|O_1O_2| \leq |P_1P_2| + (1+\cos \beta)R$.

\subsection{Consecutive Bad Edges}
\begin{figure}
\centering
\renewcommand{\baselinestretch}{1}
	\small\normalsize
  \includegraphics[height = 1.75in]{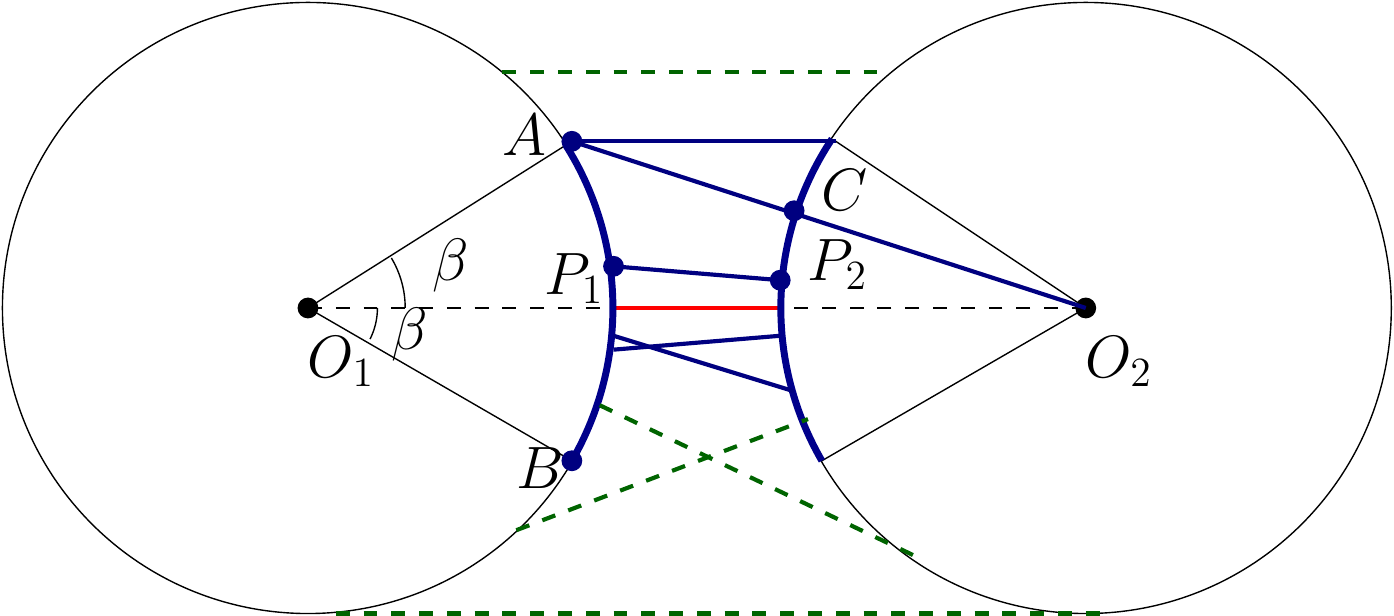}
 \caption[Description of a bad edge.]{Bad edges are guaranteed to have both endpoints in the blue arcs: if $|P_1P_2| \leq |AC|$, then $P_1P_2$ is a bad edge. In contrast, the dashed edges are guaranteed to be good edges.}\label{fig:beta}
\end{figure}
The idea behind defining bad edges in terms of $f(O_1O_2,\beta)-R$ is that it allows us to restrict the location of $P_1$ and $P_2$ on the boundary of their respective disks as seen in Figure ~\ref{fig:beta}. Specifically, there are exactly two points $A$ and $B$ on the boundary of the first disk with the property that the shortest distance from $A$ or $B$ to the boundary of the second disk is exactly  $f(O_1O_2,\beta)-R$. Not coincidentally, they form an angle of $\beta$ with $O_1O_2: \angle AO_1O_2 = \angle BO_1O_2= \beta$. In general, $P_1$ (and in a similar fashion $P_2$) is guaranteed to lie in the short arc between $A$ and $B$ whenever $P_1P_2$ is upper bounded by  $f(O_1O_2,\beta)-R$:

\begin{lemma} \label{lemma:angle} If $P_1P_2$ is bad, then the angles $\angle O_1O_2P_2$ and $\angle O_2O_1P_1$ are $\leq \beta$.
\end{lemma}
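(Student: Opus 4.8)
The plan is to reduce the claimed angle bound to the monotonicity of $f(O_1O_2,\cdot)$ in its angular argument. Write $\alpha := \angle O_2O_1P_1$; since this is an (undirected) angle between the two rays $O_1O_2$ and $O_1P_1$, we have $\alpha \in [0,\pi]$, and the goal is to show $\alpha \le \beta$. The function $f(O_1O_2,\theta) = \sqrt{|O_1O_2|^2 + R^2 - 2R|O_1O_2|\cos\theta}$ is precisely the length delivered by the law of cosines for a triangle whose two sides $|O_1O_2|$ and $R$ meet at angle $\theta$, and because $\cos$ is strictly decreasing on $[0,\pi]$ the radicand is strictly increasing in $\theta$ (its derivative is $2R|O_1O_2|\sin\theta > 0$ on $(0,\pi)$), so $f(O_1O_2,\cdot)$ is strictly increasing on $[0,\pi]$. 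Thus it suffices to establish the single inequality $f(O_1O_2,\alpha) \le f(O_1O_2,\beta)$: the hypothesis that $P_1P_2$ is bad supplies the right-hand side directly, so the work is to produce the left-hand side as a lower bound on $|P_1P_2|$.

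The key step is to lower bound $|P_1P_2|$ by the shortest distance from $P_1$ to the boundary of the second disk. Since the disks are disjoint, $P_1$ lies strictly outside the second disk, so for \emph{any} point $P_2$ on that boundary we have $|P_1P_2| \ge |P_1O_2| - R$, with equality exactly when $P_2$ lies on the segment $P_1O_2$. Applying the law of cosines to the triangle $P_1O_1O_2$, in which $|P_1O_1| = R$ and the angle at $O_1$ equals $\alpha$, gives $|P_1O_2| = f(O_1O_2,\alpha)$. Combining the two observations yields $|P_1P_2| \ge f(O_1O_2,\alpha) - R$.

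Now I would combine the two bounds. Badness of $P_1P_2$ means $|P_1P_2| \le f(O_1O_2,\beta) - R$, so $f(O_1O_2,\alpha) - R \le f(O_1O_2,\beta) - R$, i.e. $f(O_1O_2,\alpha) \le f(O_1O_2,\beta)$. By the strict monotonicity established above, this forces $\alpha \le \beta$, which is exactly $\angle O_2O_1P_1 \le \beta$. The bound $\angle O_1O_2P_2 \le \beta$ then follows by the identical argument with the roles of the two disks (and of $P_1,P_2$) interchanged: lower bound $|P_1P_2| \ge |P_2O_1| - R = f(O_1O_2,\angle O_1O_2P_2) - R$ via the triangle $P_2O_1O_2$ and repeat verbatim.

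I expect no serious obstacle; the statement is essentially a repackaging of the law of cosines together with the monotonicity of $f$. The only points requiring care are confirming that $P_1$ genuinely lies outside the second disk, so that the shortest-distance identity $|P_1O_2| - R$ is valid and nondegenerate (this is where disjointness enters), and checking that the relevant angles lie in $[0,\pi]$ so that the monotonicity of $f$ can be inverted to recover the angle inequality from the length inequality.
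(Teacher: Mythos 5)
Your proof is correct and is essentially the paper's own argument: both lower-bound $|P_1P_2|$ by the distance from one endpoint to the opposite disk, express that distance as $f(O_1O_2,\cdot)-R$ via the law of cosines, and use monotonicity of $f$ in the angle to conclude. The only cosmetic differences are that you argue from $P_1$'s side where the paper argues from $P_2$'s side, and you phrase the last step as inverting monotonicity directly rather than as a contradiction.
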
 
\begin{proof}
 Let $\gamma = \angle O_1O_2P_2 \leq \pi$ and notice that $O_1P_2= f(O_1O_2,\gamma)$. Consider the point $Q$ where $O_1P_2$ intersects the first disk and note that the shortest distance from $P_2$ to the first disk is exactly $P_2Q = f(O_1O_2,\gamma)-R$. We therefore get that $P_1P_2 \geq P_2Q$.
 Now notice that, if $\gamma > \beta$, then $f(O_1O_2,\gamma) > f(O_1O_2, \beta)$ and so $P_2Q > f(O_1O_2,\beta)-R$, which would lead to a contradiction. The same argument can be applied for $P_2$ and we get our conclusion.
\end{proof}

When a second bad edge $P_2P_3$ is considered, we can conclude that the angle $O_1O_2O_3$ has to be at most $2\beta$ and hence the TSP on the centers must make a sharp turn after it visits $O_2$. Specifically, let $O_3$ be the center of the disk visited next at $P_3$ and assume that the edge $P_2P_3$ is also bad. Notice that the angle $\angle O_1O_2O_3$ formed by the TSP is either $\angle O_1O_2P_2 + \angle P_2O_2O_3$ or $|\angle O_1O_2P_2 - \angle P_2O_2O_3|$. Regardless, we have that $\angle O_1O_2O_3 \leq \angle O_1O_2P_2 + \angle P_2O_2O_3$ and get the following corollary: 

\begin{cor}\label{cor:angle}
If both $P_1P_2$ and $P_2P_3$ are bad edges, then the angle $\angle O_1O_2O_3$ is $\leq 2\beta$. 
\end{cor}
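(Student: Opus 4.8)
The plan is to invoke Lemma~\ref{lemma:angle} once for each of the two bad edges, obtaining two angle bounds that both sit at the common vertex $O_2$ and both involve the common ray $O_2P_2$, and then to combine them by the triangle inequality for angular distance. First I would apply Lemma~\ref{lemma:angle} to the bad edge $P_1P_2$ (with first disk centered at $O_1$ and second disk centered at $O_2$); the lemma directly yields $\angle O_1O_2P_2 \leq \beta$, the angle at $O_2$ between the ray $O_2O_1$ and the ray $O_2P_2$.

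Next I would apply the same lemma to the bad edge $P_2P_3$, relabeling so that the ``first'' disk is the one centered at $O_2$ and the ``second'' disk is the one centered at $O_3$, with tour points $P_2$ and $P_3$ respectively. Under this relabeling, the ``$\angle O_2O_1P_1$'' bound of the lemma becomes a bound on the angle at $O_2$, giving $\angle O_3O_2P_2 = \angle P_2O_2O_3 \leq \beta$. Now both $\angle O_1O_2P_2$ and $\angle P_2O_2O_3$ are angles measured at $O_2$, and the ray $O_2P_2$ is shared by the two, so it splits (or is witnessed by) the angle $\angle O_1O_2O_3$ between $O_2O_1$ and $O_2O_3$.

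The final step is to combine: the angular distance between the two rays $O_2O_1$ and $O_2O_3$ is at most the sum of the angular distance from $O_2O_1$ to $O_2P_2$ and from $O_2P_2$ to $O_2O_3$, i.e. $\angle O_1O_2O_3 \leq \angle O_1O_2P_2 + \angle P_2O_2O_3 \leq \beta + \beta = 2\beta$. The only point requiring a word of care—and the nearest thing to an obstacle in an otherwise routine argument—is that the ray $O_2P_2$ need not lie angularly \emph{between} $O_2O_1$ and $O_2O_3$; depending on the configuration, $\angle O_1O_2O_3$ equals either $\angle O_1O_2P_2 + \angle P_2O_2O_3$ or the absolute difference $|\angle O_1O_2P_2 - \angle P_2O_2O_3|$. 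In both cases it is bounded above by the sum, which is why the inequality $\angle O_1O_2O_3 \leq 2\beta$ holds regardless of how $P_2$ is placed, completing the proof.
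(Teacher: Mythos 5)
Your proof is correct and follows essentially the same route as the paper: apply Lemma~\ref{lemma:angle} to each bad edge to get $\angle O_1O_2P_2 \leq \beta$ and $\angle P_2O_2O_3 \leq \beta$, then observe that $\angle O_1O_2O_3$ is either the sum or the absolute difference of these two angles and is therefore bounded by their sum, $2\beta$. No gaps; the case distinction about whether $O_2P_2$ lies between the rays $O_2O_1$ and $O_2O_3$ is exactly the point the paper also handles.
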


If that happens and the disks are close to each other, we have that one of the edges of the TSP must actually intersect a disk twice. Specifically, if $|O_1O_2| \leq R/\sin(2\beta)$, then the support line for $O_2O_3$ must pass through the disk centered at $O_1$. In general, it is not true that if $O_2O_3$ intersects the first disk, we immediately get that the associated TSPN edge $P_2P_3$ must also intersect it. In our case, however, we have that the slope of $P_2P_3$ is very close to the one of $O_2O_3$ due to the fact that it is a bad edge. We use this information to show that if $O_2O_3$ does not intersect the first disk, then $P_2P_3$ cannot be a bad edge.

\begin{theorem}\label{thm:intersect}
  If $P_1P_2$ and $P_2P_3$ are bad edges and $O_1O_2 \leq R/\sin(2\beta)$, then the segment $P_2P_3$ intersects the disk centered at $O_1$.
\end{theorem}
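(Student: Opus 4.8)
The plan is to set up coordinates that make the ``sharp turn'' explicit and then to upgrade the easy statement about the support \emph{line} of $O_2O_3$ into a statement about the finite \emph{segment} $P_2P_3$. Place $O_2$ at the origin and $O_3=(D,0)$ on the positive $x$-axis, where $D=|O_2O_3|$, and assume without loss of generality that $O_1$ lies in the closed upper half-plane. By Corollary~\ref{cor:angle} we have $\angle O_1O_2O_3 \le 2\beta$, so the height of $O_1$ equals $|O_1O_2|\sin(\angle O_1O_2O_3) \le |O_1O_2|\sin 2\beta$. The hypothesis $|O_1O_2|\le R/\sin 2\beta$ makes this height at most $R$, which is precisely the assertion that the $x$-axis (the support line of $O_2O_3$) meets the disk centered at $O_1$. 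This is the easy half and recovers the observation stated just before the theorem.

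The difficulty is that an infinite line meeting disk $1$ says nothing directly about the finite segment $P_2P_3$, which could a priori lie entirely to one side of the disk. To control it I would use that \emph{both} edges are bad. Applying Lemma~\ref{lemma:angle} to the bad edge $P_2P_3$ pins down its endpoints: $\angle O_3O_2P_2\le\beta$ forces $P_2=(R\cos\phi_2,R\sin\phi_2)$ with $|\phi_2|\le\beta$, and $\angle O_2O_3P_3\le\beta$ forces $P_3$ into the symmetric arc of disk $3$ facing $O_2$. In particular both endpoints satisfy $|y|\le R\sin\beta$, so the whole segment $P_2P_3$ is trapped in the horizontal strip of half-width $R\sin\beta$ about the $x$-axis, and (when the two disks are not essentially tangent) its direction is nearly parallel to $O_2O_3$. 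Thus $P_2P_3$ is a shortened, slightly tilted near-axis chord running from just outside disk $2$ to just outside disk $3$ — exactly the ``slope very close to $O_2O_3$'' property alluded to in the discussion.

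The crux is to combine these facts quantitatively, and here I would argue the contrapositive flagged before the statement: if the segment $P_2P_3$ avoided disk $1$ entirely, then — since $O_1$ lies within distance $R$ of the axis and (again using $|O_1O_2|\le R/\sin2\beta$) within the horizontal reach of the chord — the chord would be forced to pass around the disk rather than through it, making $|P_2P_3|$ strictly exceed the shortest admissible near-axis connection between disks $2$ and $3$, hence exceed $f(O_2O_3,\beta)-R$, contradicting badness. Concretely this reduces to showing that the foot of the perpendicular from $O_1$ to the line through $P_2P_3$ falls between $P_2$ and $P_3$ and that the perpendicular distance there is at most $R$; the near-parallel bound from the previous paragraph controls how far the tilt and the strip offset can push the chord's points away from $O_1$ relative to the axis. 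I expect the main obstacle to be precisely this bookkeeping: bounding the distance from $O_1$ to the \emph{segment} (not the line), uniformly over all admissible positions of $P_2,P_3$ in their arcs, one must simultaneously control the tilt of $P_2P_3$, the vertical offset of its endpoints into the strip, and the along-axis position of $O_1$ so as to guarantee that the perpendicular foot does not slide off either endpoint — the regime where $O_1$ is far along the turn direction being the delicate one.
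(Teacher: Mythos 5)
Your proposal handles only the easy parts and leaves the actual content of the theorem unproved. The coordinate setup, the observation that the support line of $O_2O_3$ meets disk $1$, and the strip containment $|y|\le R\sin\beta$ for $P_2,P_3$ (via Lemma~\ref{lemma:angle}) are all fine. But the crux is never argued: saying that a segment avoiding disk $1$ would be ``forced to pass around the disk'' and hence be longer than $f(O_2O_3,\beta)-R$ is a restatement of the desired conclusion, and your reduction --- that the foot of the perpendicular from $O_1$ lands between $P_2$ and $P_3$ at distance at most $R$ --- is exactly the statement to be proven, with no mechanism offered for it. Worse, no such mechanism can exist using only the facts you have assembled. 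Your parenthetical claim that $|O_1O_2|\le R/\sin 2\beta$ places $O_1$ ``within the horizontal reach of the chord'' is false: $R/\sin 2\beta$ can greatly exceed $|O_2O_3|$, so $O_1$ may lie far beyond $O_3$ along the turn direction. Concretely, take $O_2=(0,0)$, $O_3=(2.1R,0)$, $O_1=(4.2R,0)$, $\beta=5^{\circ}$, $P_1=(3.2R,0)$, $P_2=(R,0)$, $P_3=(1.1R,0)$. The disks are pairwise disjoint, $|O_1O_2|=4.2R\le R/\sin 2\beta\approx 5.76R$, both edges are bad since $|P_1P_2|=2.2R\le f(4.2R,\beta)-R\approx 2.205R$ and $|P_2P_3|=0.1R\le f(2.1R,\beta)-R\approx 0.107R$, and every containment you derive holds ($P_2,P_3$ in their $\beta$-arcs, $\angle O_1O_2O_3\le 2\beta$, $O_1$ at height at most $R$); yet the segment $P_2P_3$ stays near $x\approx R$ while disk $1$ occupies $x\ge 3.2R$. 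So the regime you flagged at the end as delicate bookkeeping is not bookkeeping at all: it is a regime in which strip containment plus badness simply do not imply the conclusion, and any complete proof must inject an ingredient that excludes it.

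For contrast, the paper argues differently: it takes the wedge bounded by the two tangent lines from $O_2$ to disk $1$ (which contains the ray $O_2O_3$ because the tangent lines make angle at least $2\beta$ with $O_2O_1$ while $\angle O_1O_2O_3\le 2\beta$), uses badness of $P_2P_3$ to force $P_3$, hence the whole segment, into this wedge, and then argues that $P_2$ and $P_3$ lie on opposite sides of disk $1$, so that a segment inside a wedge in which the disk is inscribed must cross the disk. The tangency of the wedge to disk $1$ is essential and is precisely what your horizontal strip lacks: since $O_1$ may sit at height close to $R$, a segment in the strip with endpoints on opposite sides of disk $1$ could still slip underneath it. Note also that the opposite-sides step is exactly where the paper invokes ``without loss of generality $O_1O_2\le O_2O_3$,'' which is what rules out the configuration above (and which, given the asymmetry of the hypotheses and conclusion, arguably needs justification even in the paper). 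So your instinct about where the danger lies was accurate, but the proposal as written does not contain, and cannot be completed into, a proof without that additional structural ingredient.
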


\begin{proof}
We consider the case in which $\angle O_1O_2O_3 = \angle O_1O_2P_2 + \angle P_2O_2O_3$ and note that all the other cases are similar. We denote the two lines originating at $O_2$ that are tangent to the first circle as $\ell_1$ and $\ell_2$ such that the line $O_2O_3$ is in between $\ell_1$ and $O_2O_1$. Note that this is possible because the angle that $\ell_1$ forms with $O_2O_1$ is at least $2\beta$ (since $O_1O_2 \leq R/\sin(2\beta)$) but the angle that $O_2O_3$ forms with $O_2O_1$ is at most $2\beta$ (Corollary~\ref{cor:angle}).

Our strategy will be to first show that the segment $P_2P_3$ is contained in the wedge defined by $\ell_1$ and $\ell_2$ (Figure ~\ref{fig:intersect}). Notice that, since the wedge defines a convex space, it is enough to show that $P_2$ and $P_3$ are contained in it. 

We first show that the point $P_2$ has to be in the wedge.  Let $S_1$ and $S_2$ be the points in which the segment $O_1O_2$ intersects the first and second disk. Similarly, let $T_2$ and $T_3$ be the points in which $O_2O_3$ intersects the second and third disk. We then have that $P_2$ is between $T_2$ and $S_2$.

Now we only need to show that $P_3$ is in between $\ell_1$ and $\ell_2$. We will do that by arguing that any choice of $P_3$ outside of the wedge will contradict the fact that $P_2P_3$ is a bad edge.

\begin{figure}[h]
\centering
\renewcommand{\baselinestretch}{1}
	\small\normalsize
  \includegraphics[height = 2in]{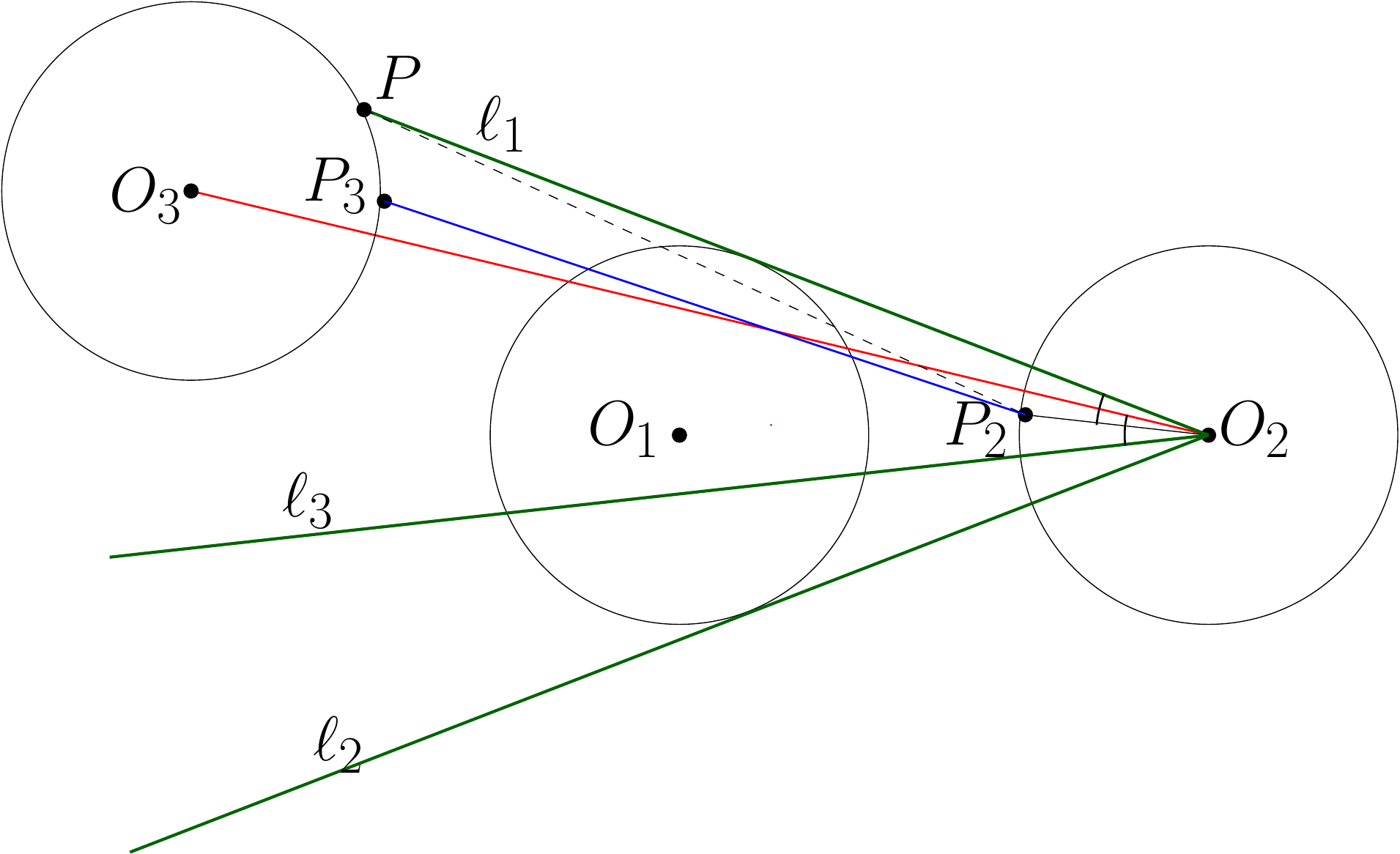}
 \caption[Argument for the proof of Lemma 4.2]{When $O_2O_3$ crosses the disk centered at $O_1$, we must also have that the segment $P_2P_3$ also crosses it. We show this by arguing that $P_2P_3$ is contained between the two lines $\ell_1$ and $\ell_3$ and that $P_2$ and $P_3$ are on separate sides of the first disk.}\label{fig:intersect}
\end{figure}

 Let $\alpha_1 = \angle O_1O_2P_2 $ and $\alpha_2 =  \angle P_2O_2O_3$, and so $\alpha_1, \alpha_2 \leq \beta$ (Lemma~\ref{lemma:angle}). First notice that if neither $\ell_1$ nor $\ell_2$ intersect the third disk, then we are done because we have that the entire boundary is contained in the convex space (since $O_3$ is already in between $\ell_1$ and $\ell_2$). Assume then that $\ell_1$ intersects the third disk at a point $P$ above the line $O_2O_3$, since $\angle O_3O_2O_1 \leq 2\beta \leq \angle PO_2O_1$. Moreover, since $\angle PO_2O_1 \geq 2\beta$, we have that $\angle PO_2P_2 \geq 2\beta - \alpha_1 \geq \beta$ and so $PP_2 \geq f(PO_2,\beta)$ (because $|P_2O_2| = R$). Since $|PO_2|\geq |T_3O_2|$ and $|T_3O_2| \geq R$, this implies that $|PP_2| \geq f(T_3O_2,\beta)$. Using the fact that $f(x,\beta) \geq x-R\cos \beta$ for any $x$ and $\beta \neq 0$, one can verify that:
 \begin{align*}
 f(T_3O_2,\beta) &= f(O_2O_3 -R, \beta) \\
                  &= \sqrt{(|O_2O_3|-R)^2 + R^2 - 2R(|O_2O_3|-R)\cos \beta}\\
                  &> \sqrt{|O_2O_3|^2 + R^2 - 2R|O_2O_3|\cos \beta}-R\\
                  &> f(O_2O_3,\beta)-R.
 \end{align*}
This means that $P$ cannot be a possible position for $P_3$ because then $P_2P_3$ would be too big. Moreover, any point $Q$ "above" $P$ (i.e. such that $\angle O_3O_2Q > \angle O_3O_2P$) would also not work as a possible position for the same reason. In other words, $P_3$ has to be underneath the line $PO_2 = \ell_1$. 

In order to prove that $P_3$ is also above the line $\ell_2$, we will consider an additional line $\ell_3$ originating at $O_2$ that makes an angle of $\beta$ with $O_2P_2$ and is underneath it. This new line makes an angle of $\beta+\alpha_2$ with $O_2O_3$ and since  $\ell_2$ makes an angle of $\geq 2\beta+\alpha_1+ \alpha_2$ with $O_2O_3$, we get that $\ell_3$ is in between $O_2O_3$ and $\ell_2$. In other words, if we show that $P_3$ is above $\ell_3$, then we also get that $P_3$ is above $\ell_2$. If $\ell_3$ does not intersect the third disk, then we are done as before, so assume that it intersects it at a point $Q$ on the boundary. Similarly as before, we have that $P_2Q = f(O_2Q,\beta) \geq f(O_2T_3,\beta) > f(O_2O_3,\beta)-R$. This in turn implies that $P_3$ has to be above $Q$, otherwise $P_2P_3$ would be too big. Therefore $P_3$ must be above the line $\ell_3$.

At this point, we have that the segment $P_2P_3$ is contained in the wedge defined by $\ell_1$ and $\ell_2$. We know that the first disk is tangent on both sides to $\ell_1$ and $\ell_2$ but this does not directly imply that $P_2P_3$ must actually intersect it. In order to have that, we must also ensure that $P_2$ and $P_3$ lie on different sides of the first disk. We argue this by showing that $O_3$ itself must be on the other side of the first disk as $O_2$. Since the disks do not intersect, this implies that $P_3$ is on a different side from $P_2$. In order to show this, notice that we can assume, without loss of generality, that $O_1O_2 \leq O_2O_3$. Let $T$ be the point on $O_2O_3$ such that $O_1T \perp O_2O_3$. Since $\angle O_1O_2O_3 \leq 2\beta$ and  $O_1O_2 \leq R/\sin(2\beta)$, this means that $T$ is contained in the first disk. Suppose that $O_3$ is on the segment $O_2T$ (effectively in between $O_1$ and $O_2$). Then $O_2O3 < O_2T$ but, since $O_2T = O_1O_2 \cos(\angle O_1O_2T) \leq O_1O_2$, this would lead to a contradiction. We therefore get that $O_2$ and $O_3$ are on different sides of the first disk and that the same is true for $P_2$ and $P_3$. This shows that the segment $P_2P_3$ must intersect the first disk.
\end{proof}

\subsection{Introducing $\beta$-triads}

The fact that the disk centered at $O_1$ is crossed by both $P_1P_2$ and $P_2P_3$ suggests that the TSPN might not be optimal because it could be shortcut. Our structural theorem identifies when that is the case and isolates the remainder as having a specialized local structure which we call a $\beta$-triad. Formally, we say that a specific TSPN subpath  $P_n - P_1 - P_2 - P_3$ is a \textbf{$\beta$-triad} if it satisfies all of the following properties (Figure~\ref{fig:triadpic}):
\begin{itemize}
\item $P_1P_2$ and $P_2P_3$ are bad edges and $O_1O_2 \leq R/\sin(2\beta)$,
\item $P_1, P_2, P_3$ are not collinear but $P_n,P_1,P_2$ are collinear with $P_1$ between $P_n$ and $P_2$. 
\end{itemize}

\begin{figure}[h]
\centering
\renewcommand{\baselinestretch}{1}
	\small\normalsize
  \includegraphics[height = 2in]{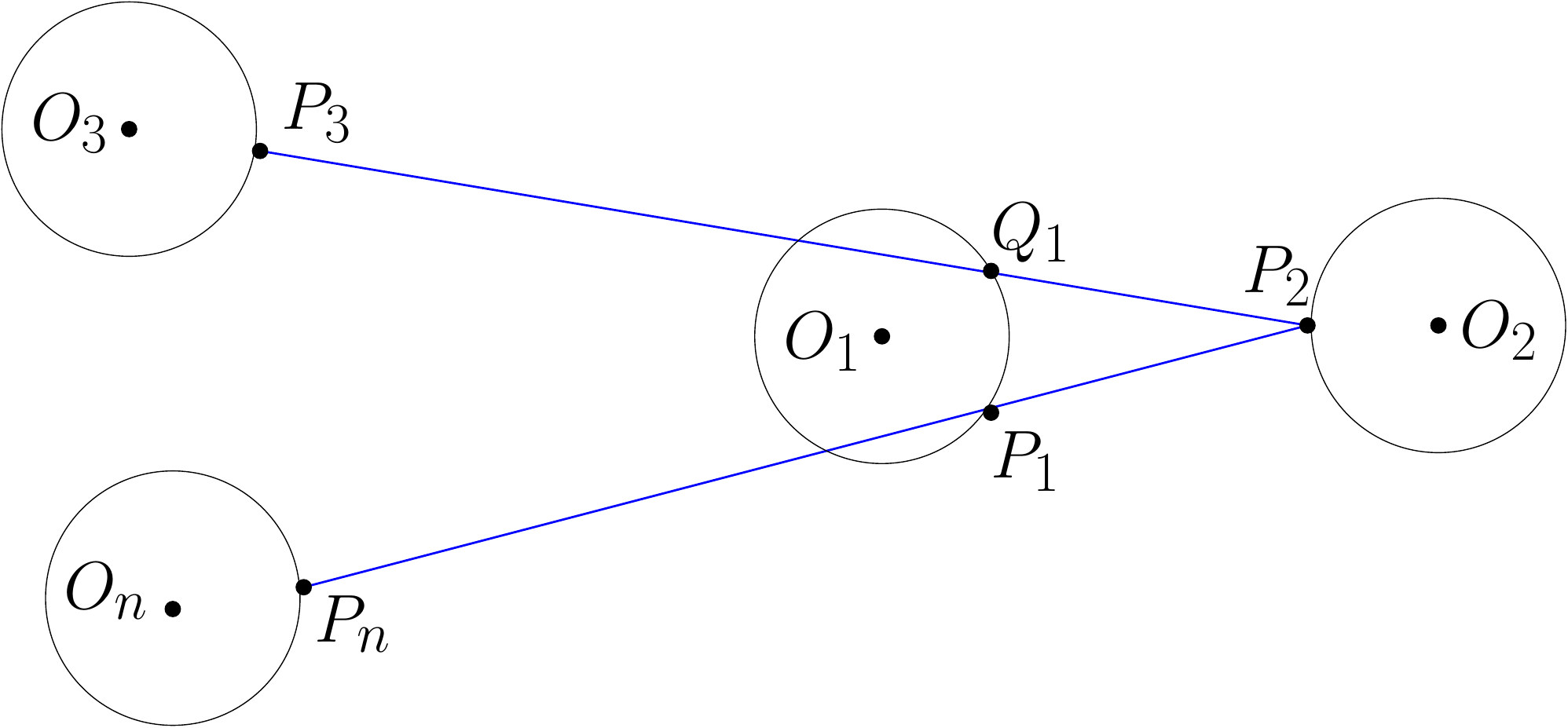}
 \caption[Definition of a $\beta$-triad.]{The path $P_n - P_1 - P_2 - P_3$ forms a $\beta$-triad.}\label{fig:triadpic}
\end{figure}

We state the structural theorem here and refer the reader to Appendix~\ref{app:main} for a complete argument. The case in which the TSPN tour follows a straight line that stabs all the disks is discussed separately in Section BLAH BLAH BLAH and is of separate interest.

\begin{theorem}\label{thm:main}
 For $n\geq 4$, if $P_1P_2$ and $P_2P_3$ are bad edges  and $O_1O_2 \leq R/\sin(2\beta)$ then at least one of the following is true:
 \begin{itemize}
 \item the TSPN tour is not optimal,
 \item the TSPN tour is supported by a straight line or
 \item the path $P_n - P_1 - P_2 - P_3$ forms a $\beta$-triad.
 \end{itemize}   
\end{theorem}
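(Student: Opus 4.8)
The plan is to combine the crossing guaranteed by Theorem~\ref{thm:intersect} with a shortcutting argument, and then to split into cases according to which local collinearities hold. Since $P_1P_2$ and $P_2P_3$ are bad and $O_1O_2 \le R/\sin(2\beta)$, the first clause of the $\beta$-triad definition already holds by hypothesis, so the entire content of the theorem is to produce the second clause (the collinearity conditions) in every case where the tour is optimal and is not a straight line.

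First I would invoke Theorem~\ref{thm:intersect} to obtain a point $X$ at which the edge $P_2P_3$ crosses the disk centered at $O_1$. The key observation is that this makes the vertex $P_1$ locally redundant: the first disk is already met by the edge $P_2P_3$ at $X$, so I can delete $P_1$ and reconnect $P_n$ directly to $P_2$. Because $n \ge 4$, the indices $n,1,2,3$ are distinct (if $n=3$ then $P_n=P_3$ and the subpath would wrap around), so this yields a genuine closed tour $\dots P_n \to P_2 \to P_3 \dots$ that still visits every disk: the first disk now via the crossing $X$, and all the others via their unchanged points $P_i$. Its length differs from the original only in the local term, replacing $|P_nP_1| + |P_1P_2|$ by $|P_nP_2|$.

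The argument then turns on the triangle inequality $|P_nP_2| \le |P_nP_1| + |P_1P_2|$, which is strict unless $P_1$ lies on the segment $P_nP_2$. If the points $P_n,P_1,P_2$ are not collinear with $P_1$ between them, the shortcut is strictly shorter, so the TSPN tour is not optimal and the first outcome holds. Otherwise $P_n,P_1,P_2$ are collinear with $P_1$ between $P_n$ and $P_2$, which is exactly the second half of the $\beta$-triad condition. In that case I would split on whether $P_1,P_2,P_3$ are collinear: if they are not, the subpath $P_n - P_1 - P_2 - P_3$ satisfies every clause of the definition and is a $\beta$-triad, giving the third outcome.

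The remaining, and main, obstacle is the fully collinear subcase, in which $P_n, P_1, P_2, P_3$ all lie on a single line $L$. Here the shortcut is only length-preserving, so optimality is not immediately contradicted. I would first pin down the order on $L$: since $P_1$ lies on both $L$ and the boundary of the first disk, while $P_2$ and $P_3$ lie outside it (the disks being disjoint) and yet the segment $P_2P_3\subseteq L$ still crosses the first disk, the point $P_3$ must fall on the opposite side of $P_1$ from $P_2$, so the tour performs a U-turn along $L$. The claim to establish is that such a configuration can occur only when the entire TSPN tour is supported by $L$, giving the second outcome; if some vertex left $L$, one could shortcut the retraced U-turn and again contradict optimality. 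Making this precise is where the real work lies, and it coincides with the straight-line regime treated separately. I would also need to dispatch the degenerate positions (tangency of $L$ to the first disk, or coincidences that would make an edge degenerate) and to verify throughout that every shortcut tour remains a valid closed TSPN tour visiting all $n$ disks.
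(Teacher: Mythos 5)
Your skeleton matches the paper's proof almost exactly: invoke Theorem~\ref{thm:intersect} to make $P_1$ redundant, shortcut $P_n$ directly to $P_2$, use the triangle inequality so that strict inequality contradicts optimality while equality forces $P_n, P_1, P_2$ to be collinear with $P_1$ between them, and then read off the $\beta$-triad when $P_1, P_2, P_3$ are not collinear. Your observation that in the collinear subcase disjointness plus the crossing forces $P_1$ to lie between $P_2$ and $P_3$ (the U-turn) is also correct and consistent with what the paper uses.

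The genuine gap is the fully collinear subcase, which is exactly where the theorem's second outcome (``the TSPN tour is supported by a straight line'') has to be produced. You reduce it to the claim that a U-turn on a line $L$ forces every vertex of an optimal tour onto $L$, but you do not prove this; you state that ``making this precise is where the real work lies'' and that it ``coincides with the straight-line regime treated separately.'' It does not: the separate treatment of the straight-line case (Section~\ref{app:line}) is an algorithmic discussion of what to output \emph{when} the tour is a line, and nowhere establishes that local collinearity propagates globally. That propagation is proved inside Theorem~\ref{thm:main} itself, by iterating your shortcut around the tour: once $P_n, P_1, P_2, P_3$ lie on $L$ (with $P_2$ on one side of $P_1$ and $P_3, P_n$ on the other), compare $|P_4P_2|$ with $|P_4P_3|+|P_3P_2|$, where the alternative tour travels along $L$ from $P_n$ through $P_1$ (and past $P_3$) to $P_2$ and then goes directly to $P_4$, so all skipped disks remain visited; strict inequality again contradicts optimality, and equality forces $P_4$ onto $L$ with $P_3$ between $P_2$ and $P_4$. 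Repeating this for $P_5, \ldots, P_{n-1}$ forces the whole tour onto $L$, in the order $P_2, P_1, P_3, P_4, \ldots, P_{n-1}$ with $P_n$ beyond $P_1$. Without this induction (including the bookkeeping that each shortcut tour still visits the bypassed disks), your argument establishes only two of the three alternatives, so the theorem as stated is not proved.
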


\begin{proof}
 We distinguish between the case in which $P_2P_3$ intersects the first disk at $P_1$ and otherwise. In the first case, we will show that either the TSPN is not optimal or all the disks are stabbed by it. The second case is more involved and reduces to describing what the local structure of the TSPN must be such that it does not necessarily fall in the previous two cases.

\textbf{Case $1$: $P_1,P_2,P_3$ are collinear.} Then consider the point $P_n$ that connects to $P_1$. The cost that the TSPN pays for visiting the four disks is $|P_nP_1|+|P_1P_2|+|P_2P_3|$ but by triangle inequality, we know that $|P_nP_2| \leq |P_nP_1| + |P_1P_2|$, so the TSPN would visit $P_2$ directly and pass through $P_1$ on its way to $P_3$. If the inequality is strict, then this directly implies that the TSPN is not optimal. When we have equality, however, this implies that $P_n,P_1$ and $P_2$ are now also collinear and furthermore, that $P_1$ lies between $P_2$ and $P_n$. In other words, we have that on the line from $P_2$ to $P_n$, we have both $P_3$ and $P_n$ to the left of $P_1$. Now look at how point $P_4$ connects to $P_3$ and notice that the portion of TSPN for the five disks is now $|P_4P_3| + |P_3P_2|+|P_1P_2|+|P_1P_n|$ and again, we can ask the question of why wouldn't the TSPN go straight to $P_2$ instead and visit $P_3$ along the line $P_2P_3$. Specifically, we have $|P_4P_2| \leq |P_4P_3| + |P_3P_2|$ with the TSPN not being optimal whenever this inequality is strict. We therefore consider the case in which $|P_4P_2|= |P_4P_3| + |P_3P_2|$ and get that now $P_4$ has to also be collinear with the other points and furthermore, $P_3$ has to be between $P_4$ and $P_2$. Continuing this process, we get that all the TSPN points would have to be collinear and in the order $P_2,P_1,P_3,P_4,\ldots, P_{n-1}$ with $P_n$ potentially being anywhere past $P_1$. In this case, we have that the TSPN is a supported by a straight line that stabs all of the disks.

\textbf{Case $2$: $P_1,P_2,P_3$ are not collinear.} Let the line $P_1P_2$ intersect the first disk for the first time at $Q_1$. By the argument from before, we know that if $|P_nP_2| < |P_nP_1| + |P_1P_2|$, then the TSPN cannot be optimal since another tour could go from $P_n$ straight to visiting $P_2$ and then visit $P_1$ on the way to $P_3$, at a lesser cost. When $P_n,P_1$ and $P_2$ are collinear, in that order, we say that $P_n - P_1 - P_2 - P_3$ form a $\beta$-triad.

\end{proof}

\subsection{Properties of $\beta$-triads}
Theorem ~\ref{thm:main} says that if $|TSPN^*|$ is not a straight line, then the triad has a local detour of at most $3\sqrt{3}R$. Lemma ~\ref{lem:betatriad} further states that all the bad triads are also edge disjoint.  In order to prove that, we go back to the proof of Theorem~\ref{thm:main}. Note that we distinguished between the case in which $P_1,P_2$ and $P_3$ are collinear (Case $1$) and when they are not(Case $2$). The first case leads to the TSPN being a straight line, which is ruled out by our assumptions. In the second case, the optimality of $TSN^*$ implies that $P_n, P_1$ and $P_2$ are also collinear, with $P_1$ between $P_2$ and $P_n$.

\begin{lemma}\label{lem:betatriad}
All the $\beta$-triads in a given TSPN tour are edge disjoint.
\end{lemma}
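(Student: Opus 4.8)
The plan is to argue by contradiction: if two \emph{distinct} $\beta$-triads shared an edge, then --- depending on how their three-edge windows align --- we would contradict either the optimality of $TSPN$ or the standing assumption (under which $\beta$-triads arise in the first place) that $TSPN$ is not supported by a straight line. To organize this, relabel the tour vertices cyclically as $V_1,\dots,V_n$ with edges $e_i=(V_i,V_{i+1})$, and record a $\beta$-triad $P_n-P_1-P_2-P_3$ by the index $a$ of its leading edge, so that $e_a=(P_n,P_1)$, $e_{a+1}=(P_1,P_2)$, $e_{a+2}=(P_2,P_3)$, with $P_1=V_{a+1}$, apex $P_2=V_{a+2}$, and $P_3=V_{a+3}$. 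Since the four vertices are consecutive in the fixed tour order and the roles follow that order, the index $a$ determines the triad; hence distinct triads have distinct indices, and two triads share an edge exactly when their indices differ cyclically by $1$ or $2$. Two facts tied to $a$ drive the argument: the defining collinearity of $P_n,P_1,P_2$ makes $\{V_a,V_{a+1},V_{a+2}\}$ collinear, while the non-collinearity of $P_1,P_2,P_3$ makes $\{V_{a+1},V_{a+2},V_{a+3}\}$ non-collinear.

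The offset-one case ($b=a+1$) is immediate. The two triads then share $e_{a+1},e_{a+2}$, and the collinear triple of the second triad, $\{V_{a+1},V_{a+2},V_{a+3}\}$ (its $P_n',P_1',P_2'$), is exactly the non-collinear triple of the first (its $P_1,P_2,P_3$). A single triple cannot be both collinear and non-collinear, so this alignment is impossible.

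The offset-two case ($b=a+2$) is the step I expect to be the main obstacle, because now no vertex is forced to be simultaneously an apex and a straight pass-through; the triads share only $e_{a+2}=(V_{a+2},V_{a+3})$, which is the second bad edge $P_2P_3$ of the first triad but the collinear predecessor $P_n'P_1'$ of the second. The predecessor role forces $V_{a+2},V_{a+3},V_{a+4}$ to be collinear, so the two bad edges $e_{a+2}$ (from the first triad) and $e_{a+3}=P_1'P_2'$ (the leading bad edge of the second) meet at $V_{a+3}$ and are collinear there. The key observation is that it is precisely the second triad that supplies the closeness hypothesis on the apex-side pair, namely that the centers of the disks carrying $V_{a+3}$ and $V_{a+4}$ lie within $R/\sin(2\beta)$. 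Reading the configuration ``from the second triad's end'' --- taking $V_{a+3}$ as the apex $P_2$, $V_{a+4}$ as $P_1$, and $V_{a+2}$ as $P_3$, with the matching disk centers as $O_2,O_1,O_3$ --- reproduces exactly the hypotheses of Theorem~\ref{thm:main} in its collinear alternative (Case~$1$ of that proof), which forces the entire $TSPN$ either to be supported by a straight line or to admit a strict shortcut. Both outcomes are excluded wherever $\beta$-triads occur, giving the contradiction. The delicate points are to verify that $e_{a+2}$ and $e_{a+3}$ are genuinely the two bad edges meeting at the apex $V_{a+3}$ and that the closeness bound is attached to the correct apex-side pair, so that Corollary~\ref{cor:angle}, Theorem~\ref{thm:intersect}, and the straight-line propagation of Case~$1$ apply without modification.

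Since offsets $1$ and $2$ are both ruled out and any larger cyclic offset leaves the two edge-triples disjoint, no two distinct $\beta$-triads can share an edge, which is the claim.
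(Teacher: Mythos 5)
Your reduction to cyclic offsets $1$ and $2$ is correct, and your offset-one argument is sound and is essentially the paper's handling of its Cases $1$ and $3$: the shared window forces one triple of points to be simultaneously collinear (as the $P_n,P_1,P_2$ of one triad) and non-collinear (as the $P_1,P_2,P_3$ of the other), which is absurd. Unifying the two directions by symmetry is fine.

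The offset-two case, however, has a genuine gap, and it is exactly at the step you flagged as delicate. After your relabeling ($P_2:=V_{a+3}$, $P_1:=V_{a+4}$, $P_3:=V_{a+2}$) the \emph{hypotheses} of Theorem~\ref{thm:main} do hold, but the configuration is not the one treated in Case~$1$ of its proof. That case, and its shortcut/straight-line propagation, is predicated on $P_1$ lying \emph{on the segment} $P_2P_3$: the replacement path $P_n \to P_2 \to P_3$ is feasible only because it still passes through $P_1$ and hence still visits $P_1$'s disk. In your configuration the betweenness is reversed: the second triad's definition places $V_{a+3}$ (your apex $P_2$) strictly between $V_{a+2}$ (your $P_3$) and $V_{a+4}$ (your $P_1$), so $P_1$ lies outside segment $P_2P_3$ and the shortcut would skip the disk through $V_{a+4}$ entirely. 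Indeed, no optimality-based argument can close this case: locally the tour runs straight through $V_{a+2}\to V_{a+3}\to V_{a+4}$, which is already as short as possible, so there is nothing to shortcut. The impossibility here is geometric, not optimality-based, and the paper proves it as a separate statement (Lemma~\ref{lem:disjoint}): two bad edges cannot meet collinearly with the middle point between the endpoints, shown via Corollary~\ref{cor:angle} and a convex-hull argument that the points would otherwise be visited out of order. Your skeleton can be repaired, but by a different mechanism than the one you cite: apply Theorem~\ref{thm:intersect} to your relabeled configuration, which forces the segment $V_{a+3}V_{a+2}$ to cross the disk through $V_{a+4}$; this is impossible, since any crossing point together with $V_{a+4}$ would, by convexity, put $V_{a+3}$ inside that disk, contradicting disjointness of the disks. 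Note also that the paper's proof is purely geometric, whereas your argument additionally assumes the tour is optimal and not supported by a line, assumptions that do not appear in the lemma's statement.
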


\begin{proof}
Assume there is another bad triad that shares edges with $P_n - P_1 - P_2 - P_3$. We distinguish four cases, based on the type of edges they have in common.

\textbf{Case $1$: $P_{n-1} - P_n - P_1 - P_2$ is a bad triad.} 
This case cannot happen since $P_n,P_1,P_2$ are collinear.

\textbf{Case $2$: $P_{n-2} - P_{n-1} - P_n - P_1$ is a bad triad.} Then, by definition, we must have that $P_nP_1$ is also a bad edge. We will show, however, than this cannot be. For this, we will use an additional lemma:

\begin{figure}[h]
\centering
\renewcommand{\baselinestretch}{1}
	\small\normalsize
  \includegraphics[height = 2in]{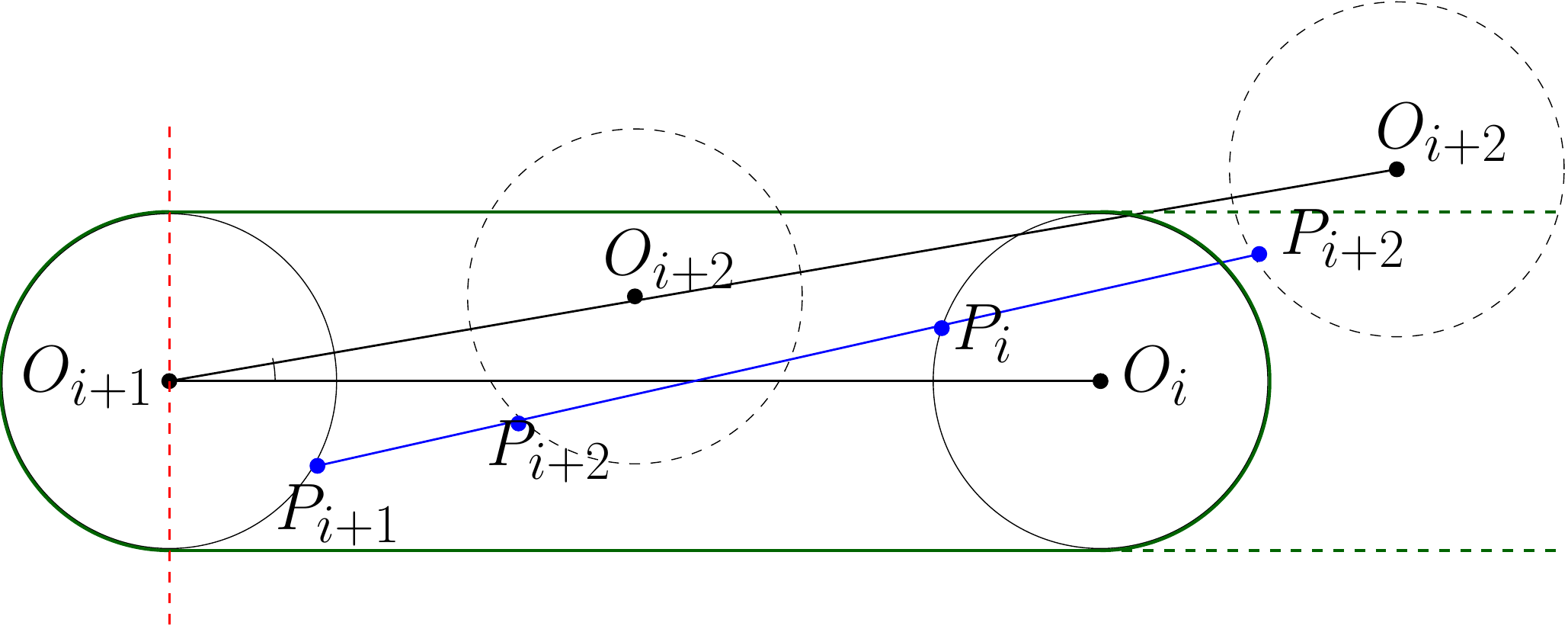}
 \caption[Argument for the proof of Lemma 4.2.]{A potential TSPN path is drawn in blue. Because the angle $\angle O_iO_{i+1}O_{i+2}  \leq \pi/2$, we have that $O_i$ and $O_{i+1}$ are on the same side of the hyperplane described by the red line. That, in turns, give two options for the disk centered at $O_{i+2}$ to intersect the (extended) convex hulls of the other two diskss, drawn in green. In each case, the points are visited in the wrong order.}\label{fig:convexhull}
\end{figure}

\begin{lemma}\label{lem:disjoint} If $P_iP_{i+2}$ is a straight line that passes through point $P_{i+1}$ such that $P_{i+1}$ is between $P_i$ and $P_{i+2}$, then it cannot be that both $P_iP_{i+1}$ and $P_{i+2}P_{i+1}$ are bad edges. 
\end{lemma}
\begin{proof}
Assume that both $P_iP_{i+1}$ and $P_{i+2}P_{i+1}$ are bad edges. Then Corollary ~\ref{cor:angle} implies that the angle $\angle O_{i}O_{i+1}O_{i+2} \leq 2\beta$. Now consider the convex hull of the two disks centered at $O_i$ and $O_{i+1}$ (Figure ~\ref{fig:convexhull}). If the disk centered at $O_{i+2}$ intersects the convex hull, then $P_{i+2}$ must be contained in that convex hull, otherwise the line $P_i - P_{i+1} - P_{i+2}$ would not exist. But in that case, the points would be visited out of order. Specifically, $P_{i+2}$ would be between $P_i$ and $P_{i+1}$. 

Now assume that the disk centered at $O_{i+2}$ does not intersect the convex hull. Since the angle $\angle O_{i}O_{i+1}O_{i+2} \leq 2\beta \leq \pi/6$, this implies that $O_{i+2}$ is in the same halfspace as $O_{i}$ with respect to the line perpendicular to $O_iO_{i+1}$ passing through $O_{i+1}$. We extend the convex hull infinitely in that halfspace by allowing the tangent lines to be infinite on that side. By the same argument as before, we know that the disk centered at $O_{i+2}$ must intersect this extended region. But then we would get again that the points are out of order: $P_{i}$ would be between $P_{i+1}$ and $P_{i+2}$.  
\end{proof}

When $P_i = P_n, P_{i+1} = P_1$ and $P_{i+2} = P_2$, Lemma ~\ref{lem:disjoint} tells us that it cannot be that $P_1$ is between $P_n$ and $P_2$ and both edges $P_1P_n$ and $P_1P_2$ are bad. Therefore we are done with this case.

\textbf{Case $3$: $P_1 - P_2 - P_3 - P_4$ is a bad triad.} This case is similar to Case $1$ and cannot happen, since $P_1, P_2$ and $P_3$ cannot be collinear.
\textbf{Case $4$: $P_2 - P_3 - P_4 - P_5$ is a bad triad.} This case is similar to Case $2$ because we have that $P_2$, $P_3$ and $P_4$ are collinear with $P_3$ between $P_2$ and $P_4$ and both $P_2P_3$ and $P_3P_4$ being bad edges. 

\end{proof}

\subsection{Structural Theorem} 
The case of $\beta$-triads is interesting because it arises naturally as a consequence of dealing with regions instead of points. Given any optimal tour that exhibits internal angles $\leq \pi/6$, we can always add an extra disk at each sharp turn that will maintain optimality, pairwise disjointness and be intersected twice by this tour, giving rise to a $\beta$-triad. It is therefore important that we understand their behaviour. 

Because of the fact that $P_1P_2$ and $P_2P_3$ are bad edges, the $\beta$-triad is likely to have a high detour with respect to $TSP(\sigma)$. Nevertheless, we show that there exists an alternate ordering $\sigma'$ such that the average detour of the three edges in the $\beta$-triad with respect to $(|TSP(\sigma)| + |TSP(\sigma')|)/2$ is $3\sqrt{3}R$. The order $\sigma'$ takes advantage of the fact that the disk centered at $O_1$ is crossed twice and inverts the order in which it is visited without changing the cost of the underlying TSPN tour. The $3\sqrt{3}R$ bound comes from proving the H\"{a}me, Hyyti\"{a} and Hakula conjecture for $n=3$ (Section ~\ref{fermat}). In order to be able to construct $\sigma'$ consistently across multiple $\beta$-triads, we also show that $\beta$-triads are isolated events and specifically that they are edge-disjoint (Lemma ~\ref{lem:betatriad}).

\begin{theorem}\label{thm:triadbound} If the TSPN in the order $\sigma$ has $k$ $\beta$-triads that together cover a set of edges of total length $L_T$, then we can construct another order $\sigma'$ that agrees with $\sigma$ on everything except the order inside the $\beta$-triads such that:
\begin{center}
 $\displaystyle{ \frac{|TSP(\sigma)| + |TSP(\sigma')|}{2} \leq |TSP(\sigma \cap \sigma')| + L_T+ 3\sqrt{3}Rk}$.
\end{center}
\end{theorem}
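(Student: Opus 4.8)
The plan is to construct $\sigma'$ explicitly, reduce the global inequality to a single per-triad estimate through the $\sigma \cap \sigma'$ bookkeeping, and then settle that estimate geometrically using the Häme–Hyytiä–Hakula bound for $n=3$.

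First I would define $\sigma'$ locally on each triad. For a triad $P_n - P_1 - P_2 - P_3$, Theorem~\ref{thm:intersect} guarantees that $P_2P_3$ crosses the disk centered at $O_1$, at some boundary point $Q_1$; and by definition $P_n, P_1, P_2$ are collinear with $P_1$ between $P_n$ and $P_2$. I let $\sigma'$ move $O_1$ from between $O_n$ and $O_2$ to between $O_2$ and $O_3$, i.e.\ locally $\ldots, O_n, O_2, O_1, O_3, \ldots$, and keep the order everywhere else untouched. The point is that the underlying polygonal curve does not change: visiting $O_2$ at $P_2$ and $O_1$ at $Q_1$ retraces the path $P_n \to P_2 \to Q_1 \to P_3$ (because $P_n, P_1, P_2$ are collinear and $Q_1 \in P_2P_3$), so $|TSPN(\sigma')| = |TSPN(\sigma)|$. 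Since Lemma~\ref{lem:betatriad} shows the $\beta$-triads are pairwise edge-disjoint, these local swaps never interfere, and $\sigma'$ is a well-defined order agreeing with $\sigma$ outside the triads.

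Next I would reduce to a per-triad inequality. Within one triad the edges of $TSP(\sigma)$ are $O_nO_1, O_1O_2, O_2O_3$ and those of $TSP(\sigma')$ are $O_nO_2, O_2O_1, O_1O_3$; the middle edge $O_1O_2$ is common and so belongs to $TSP(\sigma\cap\sigma')$, while everything outside the triads is identical in both orders. Summing over the $k$ triads, the common contributions cancel and
\[
 \frac{|TSP(\sigma)| + |TSP(\sigma')|}{2} - |TSP(\sigma\cap\sigma')| = \sum_{\text{triads}} \frac{|O_nO_1| + |O_nO_2| + |O_1O_3| + |O_2O_3|}{2}.
\]
It then suffices to prove, for each triad, that this averaged four-edge quantity is at most the triad's own edge length $|P_nP_1|+|P_1P_2|+|P_2P_3|$ plus $3\sqrt{3}R$; summing recovers $L_T + 3\sqrt{3}Rk$.

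The crux is exactly this geometric estimate, and here the averaging pays off: in $\sigma$ alone both $O_1O_2$ and $O_2O_3$ sit over bad edges and would each cost up to $2R$, but after averaging the bad edge $P_1P_2$ has been absorbed into $TSP(\sigma\cap\sigma')$. For the remaining four half-edges I would apply the $n=3$ bound of Section~\ref{fermat} to the three disks centered at $O_1, O_2, O_3$, using the closed tour $P_1 \to P_2 \to P_3 \to P_1$ as an upper bound on their optimal TSPN; this gives $|O_1O_2| + |O_2O_3| + |O_1O_3| \le |P_1P_2| + |P_2P_3| + |P_3P_1| + 3\sqrt{3}R$ and is the sole source of the constant. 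The two edges at $O_n$ I would control by the triangle inequality through the radii, $|O_nO_1| \le |P_nP_1| + 2R$ and $|O_nO_2| \le |P_nP_2| + 2R = |P_nP_1| + |P_1P_2| + 2R$ (using the collinearity). Combining these with the triangle inequality $|P_3P_1| \le |P_1P_2| + |P_2P_3|$ and the fact that a bad edge is short, $|P_1P_2| \le f(O_1O_2,\beta) - R < |O_1O_2|$, the large lengths cancel and the residual stays below $3\sqrt{3}R$. I expect the main obstacle to be precisely this balancing: the stray chord $|P_3P_1|$ appears in the $n=3$ bound but is not a tour edge and must be removed by triangle inequality, while the $O_n$-edges have to be charged against $|P_nP_1|$ without reusing $|P_1P_2|$; both the bad-edge shortness and the collinearity of $P_n, P_1, P_2$ are needed to make the cancellation work, and establishing the $n=3$ bound itself via Fermat–Weber points is a prerequisite carried out separately.
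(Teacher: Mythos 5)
Your construction of $\sigma'$, the observation that $|TSPN(\sigma')| = |TSPN(\sigma)|$ via the crossing point $Q_1$ from Theorem~\ref{thm:intersect}, and the appeal to edge-disjointness (Lemma~\ref{lem:betatriad}) all match the paper. The gap is in your bookkeeping step: you place the edge $O_1O_2$ inside $TSP(\sigma\cap\sigma')$ on the grounds that it appears in both tours. But $\sigma\cap\sigma'$ is defined as a maximal set of \emph{points} on which the two orders agree, and $O_1,O_2$ are precisely the transposed points; accordingly, the paper's proof takes $|TSP(\sigma\cap\sigma')| = |O_3O_4| + \cdots + |O_{n-1}O_n|$, which excludes $|O_1O_2|$. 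What you reduce to is therefore an inequality whose right-hand side is inflated by $|O_1O_2|\ge 2R$ per triad, i.e., a strictly weaker statement than Theorem~\ref{thm:triadbound}. This weakening is not harmless: in the proof of Theorem~\ref{thm:mama}, every edge of $TSP(\sigma\cap\sigma')$ is charged against a corresponding edge of $TSPN(\sigma\cap\sigma')$ (this is where the count $3k_1+N+M=n$ enters), and under your accounting the extra $O_1O_2$ edges have no TSPN edge left to absorb them, so the downstream detour bound breaks.

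Moreover, the deficit cannot be repaired with your ingredients. To prove the actual statement you would need, per triad, $|O_nO_1| + |O_nO_2| + |O_1O_3| + |O_2O_3| + 2|O_1O_2| \le 2L_T + 6\sqrt{3}R$. Your bounds $|O_nO_1| \le |P_nP_1| + 2R$, $|O_nO_2| \le |P_nP_1| + |P_1P_2| + 2R$, together with a single application of Theorem~\ref{thm:triangle} to $O_1,O_2,O_3$ and the triangle inequality $|P_3P_1|\le |P_1P_2|+|P_2P_3|$, reduce this to requiring $|P_1P_2| + |O_1O_2| + 4R \le 3\sqrt{3}R \approx 5.196R$, which is false: disjointness alone forces $|O_1O_2| \ge 2R$, so the left side is at least $6R$. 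The idea you are missing is the paper's double use of the $n=3$ bound: apply Theorem~\ref{thm:triangle} once to the triangle $O_nO_1O_2$ against the \emph{degenerate} TSPN triangle on $P_n,P_1,P_2$ (perimeter $2|P_nP_2|$, by the collinearity of $P_n,P_1,P_2$), and once to $O_1O_2O_3$ against the degenerate triangle on $P_2,Q_1,P_3$ (perimeter $2|P_2P_3|$, since $Q_1$ lies on $P_2P_3$). Each application absorbs one full copy of $|O_1O_2|$ at a cost of only $3\sqrt{3}R$, and the two degenerate perimeters sum to exactly $2L_T$, which yields the stated inequality after dividing by two. Replacing the first application by crude $+2R$ triangle inequalities at $O_n$ is exactly where your argument loses the room it needs.
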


\begin{proof}
We discuss the case for $k=1$ and show how to modify the argument for $k>1$. Suppose that $P_n - P_1 - P_2 - P_3$ form a $\beta$-triad. We know that $TSP(\sigma)$ visits the centers of each disk in the order $O_1, O_2, O_3, \ldots O_n$. We consider an additional order $\sigma'$ such that $TSP(\sigma')$ visits the centers in the order $O_2, O_1,O_3, \ldots O_n$. Notice that both $\sigma$ ad $\sigma'$ agree on the order $O_3,O_4, \ldots, O_{n}$ and that they differ in the fact that $\sigma$ visits $O_1$ after $O_n$ and before $O_2$ and $\sigma'$ visits $O_1$ after $O_2$ and before $O_3$. Therefore,  for $T' =TSP(\sigma \cap \sigma')$, we have that  $|T'| = |O_3O_4|+\ldots+|O_{n-1}O_n|$,  $|TSP(\sigma)| = |T'| +  |O_nO_1| + |O_1O_2| + |O_2O_3|$ and $|TSP(\sigma')| = |T'| +  |O_nO_2| + |O_2O_1| + |O_1O_3|$.

On the other hand, the length of the TSPN with respect to the orders $\sigma$ and $\sigma'$ stays the same. The local cost of visiting $P_n - P_1 - P_2 - P_3$ is $L_T = |P_nP_1| + |P_1P_2| + |P_2P_3| = |P_nP_2| + |P_2P_3|$, since $P_n, P_1$ and $P_2$ are collinear and $P_1$ is between $P_n$ and $P_2$. We also know that $P_2P_3$ intersects the disk centered at $O_1$ at some point $Q_1$ that is different from $P_1$ (Theorem ~\ref{thm:intersect}). In other words, the TSPN that visits the points $P_n - P_1 - P_2 - P_3$ can be reimagined as visiting the points $P_n - P_2 - Q_1 - P_3$ and therefore respecting the order $\sigma'$. The local cost of crossing these edges is the same as before: $|P_nP_2| + |P_2Q_1| + |Q_1P_3| = |P_nP_2| + |P_2P_3| = L_T$.

We now apply Theorem~\ref{thm:triangle} (the $3\sqrt{3}R$ bound for $n=3$) on the TSP tour $O_n -O_1 - O_2$ with the TSPN tour $P_n - P_1 - P_2$ and get that:
\begin{align*}
  |O_nO_1| + |O_1O_2| + |O_nO_2| &\leq |P_nP_1| + |P_1P_2| + |P_nP_2| + 3\sqrt{3}R &&\\
                          &\leq 2|P_2P_n| + 3\sqrt{3}R.&&
\end{align*}
On the other hand, if we consider the tour $O_1 - O_2- O_3$ with the TSPN tour $P_2 - Q_1 - P_3$, we get that:
\begin{align*}
 |O_1O_2| + |O_2O_3| + |O_1O_3| &\leq |Q_1P_2| + |P_2P_3| + |P_3Q_1| +  3\sqrt{3}R && \\
                          &\leq 2|P_2P_3| + 3\sqrt{3}R.&&
\end{align*}
Combining the two inequalities and rearranging some terms gives us that:
\begin{align*}
|TSP(\sigma)|+ |TSP(\sigma')| =& 2|T'| +  |O_nO_1| + |O_1O_2| + |O_2O_3| + \\
                               &+ |O_nO_2| + |O_2O_1| + |O_1O_3| \\
                              =& 2|T'|  + |O_nO_1| + |O_1O_2| + |O_nO_2| + \\ &+|O_1O_2| + |O_1O_3|+|O_2O_3| \\
         \leq & 2|T'|  + 2|P_2P_n| + 2 |P_2P_3| + 6\sqrt{3}R. 
\end{align*}
Since $L_T = |P_2P_n| + |P_2P_3|$, we get our conclusion.

When $k>1$, we construct the order $\sigma'$ by switching the order in which we visit the centers in each $\beta$-triad in the same way as before. Since all the $\beta$-triads are edge disjoint (Lemma ~\ref{lem:betatriad}), we can construct $\sigma'$ without any conflicts because any reordering that happens in one $\beta$-triad will not affect another $\beta$-triad.
\end{proof}

	\section{Improved Bounds on TSPN}
	\label{44}
	Our main strategy will be a careful balancing of good and bad edges, in which the detour of good edges will be upper bounded by $(1+\cos \beta)R$ and that of bad edges by $2R$. While the bad edges will have the highest detour possible, we will use the fact that they must also be large in order to lower bound the TSPN tour more efficiently than Lemma~\ref{lem:TSPNtour} from \cite{dumitrescu2003approximation} and \cite{dumitrescu2016traveling}, which we quote here for completeness. 

We quote the more general version formulated in \cite{dumitrescu2016traveling}, since we will actually use it with a slight modification. 

\begin{lemma}\cite{dumitrescu2016traveling}\label{lemma:slide} Given a connected geometric graph $G=(V,E)$ in $\mathbb{R}^2$ and $C$ the set of points that are at most $x$ away from the vertices and edges of $G$, we have that:
\begin{center}
 $ Area(C) \leq 2x \cdot |G| + \pi x^2$,
\end{center}
and this is tight in general.
\end{lemma}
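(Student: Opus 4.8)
The plan is to recognize $C$ as the $x$-neighbourhood (Minkowski sum with a disk of radius $x$) of $G$ and to bound its area by building $G$ up one edge at a time. Write $N_x(S)$ for the set of points within distance $x$ of a set $S$, so that $C = N_x(G) = \bigcup_{e \in E} N_x(e)$, and write $D_x(p)$ for the disk of radius $x$ about a point $p$. For a single segment $e$ of length $\ell$, the region $N_x(e)$ is a stadium --- a rectangle of dimensions $\ell \times 2x$ capped by two half-disks of radius $x$ --- whose area is exactly $2x\ell + \pi x^2$. This is already the source of the tightness claim, since a single segment attains the bound (as does a single vertex, for which $C = D_x(v)$ has area $\pi x^2$ while $|G| = 0$).

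First I would record the one geometric fact that drives everything: if $u$ is a point lying on an edge $e$, then $D_x(u) \subseteq N_x(e)$, because any $q$ with $|qu| \le x$ satisfies $\mathrm{dist}(q,e) \le |qu| \le x$. In particular the disk of radius $x$ about any vertex of $G$ is contained in $C$. Next I would set up the incremental argument. Since $G$ is connected, I can order the edges $e_1,\dots,e_m$ so that each $e_i$ shares at least one endpoint with the connected subgraph $G_{i-1}$ consisting of the first $i-1$ edges; concretely, process a spanning tree in BFS/DFS order from a root vertex and then append the remaining (non-tree) edges, each of which joins two already-present vertices. Let $C_i = N_x(G_i)$, with base case $C_0 = D_x(v_0)$ of area $\pi x^2$.

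For the inductive step, let $u$ be an endpoint of $e_i$ already present in $G_{i-1}$ and set $\ell_i = |e_i|$. Then
\begin{align*}
\mathrm{Area}(C_i) &= \mathrm{Area}(C_{i-1}) + \mathrm{Area}\bigl(N_x(e_i)\bigr) - \mathrm{Area}\bigl(C_{i-1} \cap N_x(e_i)\bigr) \\
&\le \mathrm{Area}(C_{i-1}) + \bigl(2x\ell_i + \pi x^2\bigr) - \pi x^2 = \mathrm{Area}(C_{i-1}) + 2x\ell_i,
\end{align*}
where the crucial inequality $\mathrm{Area}(C_{i-1} \cap N_x(e_i)) \ge \pi x^2$ holds because $D_x(u)$ lies in both $C_{i-1}$ (as $u$ is a vertex of $G_{i-1}$) and $N_x(e_i)$ (by the fact above). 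Summing telescopes to $\mathrm{Area}(C) \le \pi x^2 + 2x \sum_i \ell_i = 2x|G| + \pi x^2$.

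The one point that needs care --- and the place where a naive argument loses a factor --- is this overlap accounting. It is tempting to charge only the half-disk cap of the stadium nearest $u$ (area $\tfrac{1}{2}\pi x^2$) against the existing region, which would leave a surplus of $\tfrac{1}{2}\pi x^2$ per edge and ruin the bound. The saving observation is that the \emph{whole} disk $D_x(u)$, not merely one cap, already lies inside $N_x(e_i)$, so the full $\pi x^2$ of cap area is absorbed and each edge contributes only its rectangular part $2x\ell_i$. The remaining checks are routine: the spanning-tree ordering keeps every $G_i$ connected by construction, and the stadium area formula is elementary. I would close by exhibiting the single segment (equivalently, a single vertex) as the instance for which the bound is attained, establishing the tightness claim.
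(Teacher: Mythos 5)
Your proof is correct and follows essentially the same charging scheme the paper sketches: pay $\pi x^2$ once for a starting vertex, then use connectivity to add edges one at a time, each contributing only $2x\cdot|e|$ of new area because the disk around the shared endpoint is already covered. Your inclusion--exclusion argument with $D_x(u) \subseteq C_{i-1} \cap N_x(e_i)$ is just a rigorous formalization of that same sweep/charging idea.
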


The shape $C$ defined above can alternatively be thought of as the shape we describe when we slide a disk of radius $x$ along the edges of $G$ (i.e. the Minkowski sum of $G$ with a disk of radius $x$). The charging scheme behind the analysis starts at an arbitrary vertex of $G$ and initially pays a charge of $\pi x^2$ for it. Next, each edge $e$ we sweep from this vertex to the next will incur an additional charge of only $|e| \cdot 2x$. Because the graph $G$ is connected, we can continue this way and sweep through the entire graph while only incurring an additional charge of $e|G| \cdot 2x$. Next, we instantiate Lemma ~\ref{lemma:slide} with $G$ being a TSPN tour and $x = 2R$ and notice that the disk of radius $2R$ visiting a vertex $P$ on the boundary of a disk actually covers the entire disk of radius $R$ whose boundary $P$ is on. Since the disks are disjoint, we get that $Area(C) \geq \pi R^2 \cdot n$ and so we have that:

\begin{lemma}\cite{dumitrescu2003approximation,dumitrescu2016traveling}\label{lem:TSPNtour} For $n$ disjoint disks of radius $R$, we have that any TSPN tour $\mathcal{T}$ on them satisfies:
\begin{center}
    $ \frac{\pi}{4} Rn - \pi R \leq |\mathcal{T}|$.
\end{center}
\end{lemma}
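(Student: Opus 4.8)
The plan is to apply Lemma~\ref{lemma:slide} with the tour $\mathcal{T}$ playing the role of the connected geometric graph $G$ and with slide radius $x = 2R$. Since $\mathcal{T}$ is a polygonal cycle it is certainly connected, so the hypothesis holds and we immediately get an upper bound on the swept area: $Area(C) \leq 2\cdot 2R \cdot |\mathcal{T}| + \pi(2R)^2 = 4R\,|\mathcal{T}| + 4\pi R^2$, where $C$ is the set of points within distance $2R$ of $\mathcal{T}$ (equivalently, the Minkowski sum of $\mathcal{T}$ with a disk of radius $2R$). This handles one side of the estimate essentially for free.

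Next I would establish a matching \emph{lower} bound on $Area(C)$, which is where the geometry of uniform disks enters. The key observation is that a disk of radius $2R$ centered at any point $P$ lying on the boundary of an input disk of radius $R$ contains that entire input disk: the farthest point of the input disk from a boundary point $P$ is the diametrically opposite boundary point, at distance exactly $2R$. By the definition of a TSPN tour, each of the $n$ input disks has some vertex $P_i$ of $\mathcal{T}$ on its boundary, and $P_i \in \mathcal{T}$; hence sliding the $2R$-disk along $\mathcal{T}$ sweeps over $P_i$ and therefore covers the whole corresponding input disk. Thus $C$ contains all $n$ input disks. Because these disks are pairwise disjoint, their areas add up inside $C$, giving $Area(C) \geq \pi R^2 n$.

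Finally I would chain the two bounds: $\pi R^2 n \leq Area(C) \leq 4R\,|\mathcal{T}| + 4\pi R^2$. Dividing through by $4R$ and rearranging yields $\frac{\pi}{4}Rn - \pi R \leq |\mathcal{T}|$, which is exactly the claimed inequality. There is no serious obstacle here, since Lemma~\ref{lemma:slide} supplies the upper bound and the remainder is a standard packing argument; the only point meriting care is the covering claim that the $2R$-disk swallows the entire $R$-disk, which relies precisely on $P_i$ lying on the boundary (the tight case) and on having chosen the slide radius to be $2R$ rather than $R$.
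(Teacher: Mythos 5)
Your proof is correct and follows essentially the same route as the paper: instantiate Lemma~\ref{lemma:slide} with $G=\mathcal{T}$ and $x=2R$, observe that the swept $2R$-disk at each visited boundary point covers the corresponding input disk entirely, use disjointness to get $Area(C)\geq \pi R^2 n$, and rearrange. No gaps; the covering observation and the choice $x=2R$ are exactly the paper's argument.
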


%
%

\subsection{Disjoint Uniform Disks}  
We will now show the proof of Theorem~\ref{thm:mama}.
\begin{proof} Assume the  $TSPN^*$ is not a straight line. We start by singling out the $\beta$-triads and considering the two orderings $\sigma$ and $\sigma'$ from Theorem ~\ref{thm:triadbound}. If there are $k_1$ $\beta$-triads $\mathcal{T}_1, \ldots, \mathcal{T}_{k_1}$ spanning edges of total length $L_T$, we get that:
  \begin{align*}
   |TSP^*| &\leq \displaystyle{\frac{|TSP(\sigma)| + |TSP(\sigma')|}{2}} \\
            &\leq \displaystyle{ |TSP(\sigma \cap \sigma')| + L_T + 3\sqrt{3}R \cdot k_1}.
  \end{align*}
  
Observe that $TSPN(\sigma \cap \sigma')$ is a collection of disjoint paths. From all of these paths, we further extract each from these a total of $k_2$ subpaths $\mathcal{G}_1,\ldots, \mathcal{G}_{k_2}$ consisting of good edges. Notice that the remaining subpaths left in $\sigma \cap \sigma'$ consist of bad edges which do not form a $\beta$-triad.  Suppose we obtain $l$ such  remaining subpaths paths $\mathcal{B}_1,\ldots,  \mathcal{B}_l$. In other words, we have decomposed the TSPN into three categories of subpaths:
 \begin{itemize}
  \item $k_1$ $\beta$-triads  $\mathcal{T}_1, \ldots, \mathcal{T}_{k_1}$,
  \item $k_2$ paths $\mathcal{G}_1,\ldots, \mathcal{G}_{k_2}$ that cover the remaining good edges, and
  \item $l$ paths  $\mathcal{B}_1,\ldots,  \mathcal{B}_l$ that consist only of bad edges which do not form $\beta$-triads.
 \end{itemize}

We are now ready to evaluate the detour that each of these paths takes. For each $i \in [1,k_2]$ let $\psi_i$ the natural order on the disks associated with $\mathcal{G}_i$ and let $n_i$ be the number of edges in $\mathcal{G}_i$. We have that:
\begin{center}
 $|TSP(\psi_i)| \leq |TSPN(\psi_i)| + ( 1+\cos \beta)R \cdot n_i$.
\end{center}
 
When it comes to the paths $\mathcal{B}_j$, with $j \in [1,l]$, let $\sigma_j$ be their natural associated orders and let $m_j$ be the number of edges it contains. We have that $|TSP (\sigma_j)| \leq |TSPN(\sigma_j)| + 2R \cdot m_j$.
 
 Let $N = \sum_{i=1}^{k_2} n_i$ be the total number of edges in $\mathcal{G}_1,\ldots, \mathcal{G}_{k_2}$ and $M = \sum_{j=1}^{l} m_j$ the total number of edges in $\mathcal{B}_1,\ldots,  \mathcal{B}_l$. By construction, we decomposed $TSPN(\sigma \cap \sigma')$ into these two groups of edge disjoint paths and we therefore get that:
 \begin{align*}
   |TSP(\sigma \cap \sigma')| &= \displaystyle{\sum_{i=1}^{k_2}} |TSP(\psi_i)|  +  \displaystyle{\sum_{j=1}^{l}} |TSP(\sigma_j)| &&\\
                             &\leq \displaystyle{\sum_{i=1}^{k_2} \Big( |TSPN(\psi_i)| + (1 +\cos \beta) R \cdot n_i  \Big)}\\ 
                             & +  \displaystyle{\sum_{j=1}^{l} \Big( |TSPN(\sigma_j)| + 2R \cdot m_j \Big)} &&\\
                            &\leq |TSPN(\sigma \cap \sigma')| + (1 +\cos \beta)RN + 2R M. &&
 \end{align*}
 
 Including the $\beta$-triads back into our bound, we get that:
 \begin{align*}
 |TSP^*| &\leq \frac{|TSP(\sigma)| + |TSP(\sigma')|}{2}&& \\
         & \leq |TSP(\sigma \cap \sigma')| + L_T + 3\sqrt{3}R \cdot k_1 &&\\
         & \leq |TSPN(\sigma \cap \sigma')| + L_T +  3\sqrt{3}R \cdot k_1 +  (1 +\cos \beta)  RN + 2RM && \\
         & \leq |TSPN| + 3\sqrt{3}R \cdot k_1 +  (1 +\cos \beta)RN + 2RM. &&
 \end{align*}
 
In other words, we've expressed the total detour of the $TSPN$ according to  edges that participate in $\beta$-triads, edges in $\mathcal{G}_1,\ldots, \mathcal{G}_{k_2}$  and edges in  $\mathcal{B}_1,\ldots,  \mathcal{B}_l$. By construction, none of these paths share edges and so $3k_1 + N + M = n$. Let $K = 3k_1 + N$ be the total number of edges either in a $\beta$-triad or in $\mathcal{G}_1, \ldots, \mathcal{G}_{k_2}$ and since $\sqrt{3} \leq 1 +\cos \beta$, we have that:
\begin{center}
 $|TSP^*| \leq |TSPN| + (1 + \cos \beta)R \cdot K + 2R \cdot (n-K)$.
\end{center}

\textbf{Case $1$: when $K \geq \frac{n}{2}$.} In this situation, we have that:
\begin{center}
 $|TSP^*| \leq |TSPN^*| + \displaystyle{ \frac{3 +\cos \beta}{2}  }\cdot R \cdot n$. 
\end{center}

The average detour per edge $\frac{3+\cos \beta}{2}$ is better than the $2R$ bound, but it is constrained by the choice of $\beta \in [0, \pi/12]$, which means that the best we could hope for is an average detour of $\frac{1}{2}(3+\cos \frac{\pi}{12})R < 1.983R$. We note that the average detour in the H\"{a}me, Hyyti\"{a} and Hakula conjecture is $\sqrt{3}R \approx 1.732R$. Using Lemma ~\ref{lem:TSPNtour} gives us that
\begin{align*}
|TSP^*| &\leq  \displaystyle{ \Big(1 +  \frac{2}{\pi} \cdot(3+\cos \beta)\Big)} \cdot |TSPN^*| + \\
        &+ 2 \cdot (3+\cos \beta)R&.
\end{align*}
For large $n$, the $1 + \frac{2}{\pi} \cdot(3+\cos \beta)$ term will dominate our approximation factor and is at most $3.525$, when $\beta = \pi/12$. 

\textbf{Case $2$: when $K < \frac{n}{2}$.} In this situation, even the overall detour might be large, we will show that in fact, in this case, $TSP^*$ is a $2$-approximation and therefore, the best that it can be in general. We know that each path $\mathcal{B}_j$ consists of bad edges which do not form any $\beta$-triads. In other words, if $P_1P_2$ is an edge in it, then we know that $|O_1O2| > R/\sin(2\beta)$ which in turn means that $|P_1P_2| > (1/\sin(2\beta)-2) \cdot R$. Overall we have that:
\begin{align*}
&&|TSPN| &\geq \displaystyle{\sum_{j=1}^{l} |\mathcal{B}_j|} \geq \Big(\frac{1}{\sin{(2\beta})} -2\Big)R \cdot (n-K) &&\\
&&       &\geq \Big (\frac{1}{2\sin{(2\beta)}} -1 \Big)R \cdot n.&&
\end{align*}

Since the total detour could be at most $2R$ per edge, we get that:
\begin{center}
$|TSP^*| \leq \displaystyle{ \Big( 1 + \frac{2}{\frac{1}{2\sin{(2\beta)}} -1} \Big )} \cdot |TSPN|$.
\end{center}
When $\beta = \frac{1}{2}\arcsin\frac{1}{6}$, the detour from Case $1$ becomes $\frac{3 + \cos \beta}{2} \approx 1.998$ and the approximation factor from Case $2$ becomes exactly $2$. 
We note that the machinery described can be used to obtain more nuanced results. In particular, lower choices for $\beta$ will drive the approximation factor in Case $2$ even lower than $2$, at the expense of a higher detour bound for Case $1$. 

\end{proof}

More generally, we can consider a parameter $\alpha >1$ that we will set later in the proof. We include here only the aspects that change. Depending on whether $K \leq \frac{n}{\alpha}$ or not, we will employ different lower bounds on $|TSPN|$, in a similar fashion as before.

\textbf{Case $1$: when $K \geq \frac{n}{\alpha}$.} In this situation, we have that:
\begin{center}
 $|TSP^*| \leq |TSPN^*| + \displaystyle{ \frac{1 +\cos \beta + 2(\alpha-1)}{\alpha}  }\cdot R \cdot n$. 
\end{center}
 Using Lemma ~\ref{lem:TSPNtour} gives us that
\begin{align*}
|TSP^*| &\leq  \displaystyle{ \Big(1 +  \frac{4}{\pi} \cdot \frac{1 +\cos \beta + 2(\alpha-1)}{\alpha} \Big)} \cdot |TSPN^*| + 4 \cdot \frac{1 +\cos \beta + 2(\alpha-1)}{\alpha}R\\
         &\leq \displaystyle{ \Big(1 +  \frac{4}{\pi} \cdot \frac{1 +\cos \beta + 2(\alpha-1)}{\alpha} \Big)} \cdot |TSPN^*|+ 8R\\
         &\leq \displaystyle{ \Big(1 +  \frac{8}{\pi} - \frac{4}{\pi} \cdot \frac{1 -\cos \beta}{\alpha} \Big)} \cdot |TSPN^*|+ 8R.
\end{align*}

\textbf{Case $2$: when $K < \frac{n}{\alpha}$.}
 We know that each path $\mathcal{B}_j$ consists of bad edges which do not form any $\beta$-triads. In other words, if $P_1P_2$ is an edge in it, then we know that $|O_1O2| > R/\sin(2\beta)$ which in turn means that $|P_1P_2| > (1/\sin(2\beta)-2) \cdot R$. Overall we have that:
\begin{align*}
&&|TSPN| &\geq \displaystyle{\sum_{j=1}^{l} |\mathcal{B}_j|} &&\\
&&       &\geq \Big(\frac{1}{\sin{(2\beta})} -2\Big)R \cdot (n-K) &&\\
&&       &\geq \frac{\alpha-1}{\alpha} \cdot\Big(\frac{1}{\sin{(2\beta})} -2\Big)R \cdot n.&&
\end{align*}
Since the total detour could be at most $2R$ per edge, we get that:
\begin{center}
$|TSP^*| \leq \displaystyle{ \Big( 1 + \frac{\alpha}{\alpha-1} \cdot \frac{2}{\frac{1}{\sin{(2\beta)}}-2} \Big )} \cdot |TSPN|$.
\end{center}

If we want to achieve a factor $2$-approximation in Case $2$, we need to have $\beta \leq \frac{1}{2}\arcsin(\frac{1}{4}) $ and set
\begin{center}
 $\alpha = 1 + 2 / (\frac{1}{\sin{(2\beta)}}-4)$.
\end{center}
In this case, the detour in Case $1$ becomes $2-(1-\cos \beta)(2 - 1/(1-2\sin{(2\beta)})$ which achieves a minimum of $\approx 0.998$ on the interval $[0,\frac{1}{2}\arcsin(\frac{1}{4})]$. Setting $\alpha = 1 + 2/(c/{\sin(2\beta)} - 2c-2)$ for $c = 2.53$ and $\beta = 0.1831$ gives us that both of these cases lead to a $2.53$-approximation. 

\subsection{Overlapping Uniform Disks}

We discuss how the analysis from the disjoint case carries over to the case of overlapping disks. As we mentioned before, the best known approximation for this case is by Dumitrescu and T\'{o}th~\cite{dumitrescu2016traveling}. In general, approaches for this case take advantage of known analyses for the disjoint case and adapt them in a smart way to the overlapping case. We begin by roughly describing the technique of Dumitrescu and T\'{o}th~\cite{dumitrescu2016traveling} and the show how the analysis changes when we use our framework. 

Specifically, Dumitrescu and T\'{o}th~\cite{dumitrescu2016traveling} start by computing a monotone maximal set of disjoint disks $\mathcal{I}$ by greedily selecting the leftmost disk and deleting all of the other input disks that intersect it. Let $k$ be the size of the set we end up wth. They then compute an approximate TSP tour on the centers of the disks in $\mathcal{I}$, either using the available schemes~\cite{arora1998polynomial,mitchell1999guillotine} or Christofides~\cite{Chr76}. We call this tour $T_\mathcal{I}$. They then augment this tour in such a way that we visit all the input disks, not just the ones in $\mathcal{I}$. Before we discuss the augmentation part, we first define some notation and mention some bounds that follow naturally. 

Let the optimal TSP tour on the centers in $\mathcal{I}$ be $TSP^*_\mathcal{I}$. The eventual tour $T_\mathcal{I}$ that we compute will be an $a$-approximation to $TSPN^*_\mathcal{I}$ so we have that:
\begin{equation}\label{bound:1}
 |T_\mathcal{I}| \leq a \cdot |TSP^*_\mathcal{I}|. 
\end{equation}

On the other hand, we know that this set of disks also has an associated optimal TSPN tour, which we call $TSPN^*_{I}$. Finally, we denote the optimal TSPN tour on all the disks by $TSPN^*$. We know that the tour on $\mathcal{I}$ is a lower bound:
\begin{equation}\label{bound:2}
 |TSPN^*_\mathcal{I}| \leq |TSPN^*|. 
\end{equation}
The size of our final solution will be compared to $|TSPN^*|$ and to that end, we use lower bounds on $|TSPN^*_\mathcal{I}|$ in conjunction with (\ref{bound:2}) to get lower bounds on $|TSPN^*|$. This is the part where our new framework will come in, because $|TSPN^*_\mathcal{I}|$ is a tour on disjoint disks by definition. 

The next step is to augment $T_\mathcal{I}$ with detours of length $O(R)$ along the disks in $\mathcal{I}$ such that it touches every other disk not in $\mathcal{I}$. The total length of the solution would then become $|\mathcal{T}_{\mathcal{I}}| + O(1) \cdot |\mathcal{I}| \cdot R$. Specifically, Dumitrescu and T\'{o}th~\cite{dumitrescu2016traveling} consider short curves around each disk in $\mathcal{I}$ that are guaranteed to cross any of the disks to its right that intersect it. Because the maximal set was chosen from left to right, that covers all the disks that could possibly intersect it. We refer the reader to ~\cite{dumitrescu2016traveling} for the detailed construction. The authors show that the length of the resulting tour $T$ is within $O(1) \cdot |\mathcal{I}| \cdot R$ of $|T_\mathcal{I}|$:
\begin{equation}\label{bound:3}
 |T| \leq |T_\mathcal{I}| + (A\cdot k + B)\cdot R, 
\end{equation}
where $A = 2\cdot (\frac{\pi}{6} + \sqrt{3}-1)$ and $B = 4 - \sqrt{3}$. 

Combining \ref{bound:1} and \ref{bound:3}, we upper bound the length of the solution $|T|$ in terms of $ |TSP^*_\mathcal{I}|$ as such:
\begin{align*}
&&|T| & \leq  |T_\mathcal{I}| + (A\cdot k + B)\cdot R&&\\
&&    & \leq  a \cdot |TSP^*_\mathcal{I}| + (A\cdot k + B)\cdot R &&
\end{align*}

In order to complete the analysis, we would need to bound $ |TSP^*_\mathcal{I}|$ in terms of $|TSPN^*|$ and we do that through $|TSPN^*_{\mathcal{I}}|$. The analysis from  Dumitrescu and T\'{o}th~\cite{dumitrescu2016traveling} uses the bounds from Dumitrescu and Mitchell~\cite{dumitrescu2003approximation} for the case of disjoint disks. Specifically, they apply Lemma~\ref{lem:TSPNtour} to get that:
\begin{equation*}
 kR \leq \frac{4}{\pi} \cdot |TSPN^*_{\mathcal{I}}| + 4 R. 
\end{equation*}
This, together with the bound $|TSP^*_{\mathcal{I}}| \leq |TSPN^*_{\mathcal{I}}| + 2Rk$ and (\ref{bound:2}) yields:

 \begin{align*}
 &&|T| & \leq  a\cdot |TSP^*_\mathcal{I}| + (A k + B)\cdot R&&\\
 &&    & \leq  a \cdot (|TSPN^*_{\mathcal{I}}| + 2Rk) + (A k + B)\cdot R&&\\
 &&    &\leq  a \cdot |TSPN^*_{\mathcal{I}}| + (2 a + A) \cdot kR + BR&&\\
 &&    &\leq a \cdot |TSPN^*_{\mathcal{I}}| + (2 a+ A) \cdot \Big( \frac{4}{\pi} |TSPN^*_{\mathcal{I}}| + 4 R\Big)  + BR &&\\
 &&    &\leq \Big(a + (2 a + A)\frac{4}{\pi}\Big) \cdot |TSPN^*_{\mathcal{I}}| + (8a + 4 A + B)R&&\\
 &&   &\leq \Big((1+\frac{8}{\pi})a+\frac{4A}{\pi}\Big) \cdot |TSPN^*_{\mathcal{I}}| + (8a + 4 A + B)R&&\\
 &&   &\leq \Big((1+\frac{8}{\pi})a+\frac{4A}{\pi}\Big) \cdot |TSPN^*| + (8a + 4 A + B)R&&
 \end{align*}

Plugging in the values for $A$ and $B$ gives an overall approximation term of:
\begin{equation*}
 (1+\frac{8}{\pi})a+\frac{4A}{\pi} \leq \Big( \frac{7}{3} + \frac{8\sqrt{3}}{\pi}\Big) \cdot (1+\epsilon) \leq 6.75\cdot(1+\epsilon).
\end{equation*}

Our framework changes the last stage in which we compare $|TSP^*_{\mathcal{I}}|$ with $|TSPN^*_{\mathcal{I}}|$. We do a similar analysis as in the disjoint case, except for the tour on $\mathcal{I}$. We get that \textbf{Case 1} would therefore correspond to getting that:
\begin{equation*}
 |TSP^*_\mathcal{I}| \leq |TSPN^*_\mathcal{I}| + X \cdot R \cdot k, 
\end{equation*}
where $ X = \displaystyle{ 2 - \frac{1 -\cos \beta }{\alpha}  }$ (instead of $2R$).
We can then replace it in the analysis and get:
 \begin{align*}
 && |T| & \leq   a \cdot (|TSPN^*_{\mathcal{I}}| + XRk) + (A k + B)\cdot R&&\\
 &&  &\leq \Big(a + (X a + A)\frac{4}{\pi}\Big) \cdot |TSPN^*_{\mathcal{I}}| + (8a + 4 A + B)R&&\\
 &&  & \leq \Big((1+\frac{4X}{\pi})a+\frac{4A}{\pi}\Big) \cdot |TSPN^*| + (4Xa + 4 A + B)R  &&
 \end{align*}

In \textbf{Case $2$}, we have that the overall detour is $2Rk$, but there is a different lower bound on $|TSPN^*_\mathcal{I}|$:
\begin{equation*}
|TSPN^*_\mathcal{i}| \geq Y \cdot Rk,
\end{equation*}
where  $Y = \frac{\alpha-1}{\alpha} \cdot \Big (1/(2\sin{(2\beta)}) -1 \Big)$. Using the fact that $Rk \leq 1/Y \cdot |TSPN^*_\mathcal{I}|$, the analysis then becomes:
\begin{align*}
 && |T|   &\leq  \alpha \cdot |TSPN^*_{\mathcal{I}}| + (2 \alpha + A) \cdot kR + BR&&\\
 &&       &\leq  \alpha \cdot |TSPN^*_{\mathcal{I}}| + \frac{2 \alpha + A}{Y} \cdot |TSPN^*_{\mathcal{I}}| + BR&&\\
 &&       &\leq \Big( \alpha + \frac{2 \alpha + A}{Y} \Big) \cdot |TSPN^*_{\mathcal{I}}|  + BR.&&
\end{align*}
If we set $\alpha $ and $\beta$ like in the previous section, we get that both of the approximation factors are upper bounded by $6.728$.

\subsection{The Straight Line Case}\label{app:line}

Here we focus on the second possibility in Theorem ~\ref{thm:main} in which the optimal TSPN is supported by a straight line that stabs all the disks. We show that in this case, we can return in polynomial time a solution that is within an additive factor of $4R$ from the optimal $TSPN^*$. We note that when the TSPN might not be a line but the disks themselves admit a line transversal, a $\sqrt{2}$-approximation follows from the work of Dumitrescu and Mitchell~\cite{dumitrescu2003approximation}. We explain the result for completeness.

We start by identifying the centers that are the farthest apart and considering the direction orthogonal to the line going through them. This direction induces parallel segments of length $2R$ in each of the disks (that each go through the centers). It is easy to check that any line transversal through the disks is a line transversal through the segments except for the first and last disk in the associated geometric permutation (for those two disks, the TSPN will stop at the boundary of the disk and never cross the entire circle). Conversely, any line transversal through the segments will automatically also stab the disks. Now compute a shortest line segment that stabs all of these segments in time $O(n\log n)$ using the algorithm of Bhattacharya et al.~\cite{bhattacharya1992computing}. We note that this is optimal up to an additive factor of $4R$ that comes from the fact that the optimal $TSPN^*$ might have to travel $4R$ to hit the first and the last two segments in the geometric permutation. 

In general, when we know that the disks admit a line transversal, we can output a solution that is a $\sqrt{2}$-approximation~\cite{dumitrescu2003approximation}. This follows indirectly from an algorithm used for connected regions of the same diameter, when there is a line that stabs all of the diameters. Given the parallel segments of length $2R$ that we constructed earlier, we know that they can also be stabbed by a line. Now consider the smallest perimeter axis-aligned rectangle that intersects all of the segments, of width $w$ and height $h$. This will be the solution that we return. Arkin and Hassin~\cite{arkin1994approximation} argued that any tour which touches all four sides of the rectangle must have length at least $2\sqrt{h^2+w^2}$. Since $h+w \leq \sqrt{2}\cdot \sqrt{h^2+w^2}$, we get that the rectangle is a $\sqrt{2}$-approximation.

	\section{The Fermat-Weber Approach}
	\label{45}
	\label{fermat}

In this section, we prove that the H\"{a}me, Hyyti\"{a} and Hakula conjecture is true for $n=3$ and discuss a different way of looking at the TSPN tour that we believe might be of independent interest. We start with the observation that the shortest tour on the centers is equivalent to the shortest tour on translates of those centers, as long as all those centers are translated according to the same vector. In other words, if we fix a direction and translate each center along that direction until it reaches its boundary, the shortest tour on the newly obtained points will be exactly the same as the shortest tour on the centers themselves.

\begin{figure}[h]
	\centering
	\renewcommand{\baselinestretch}{1}
	\small\normalsize
	\includegraphics[height =2in]{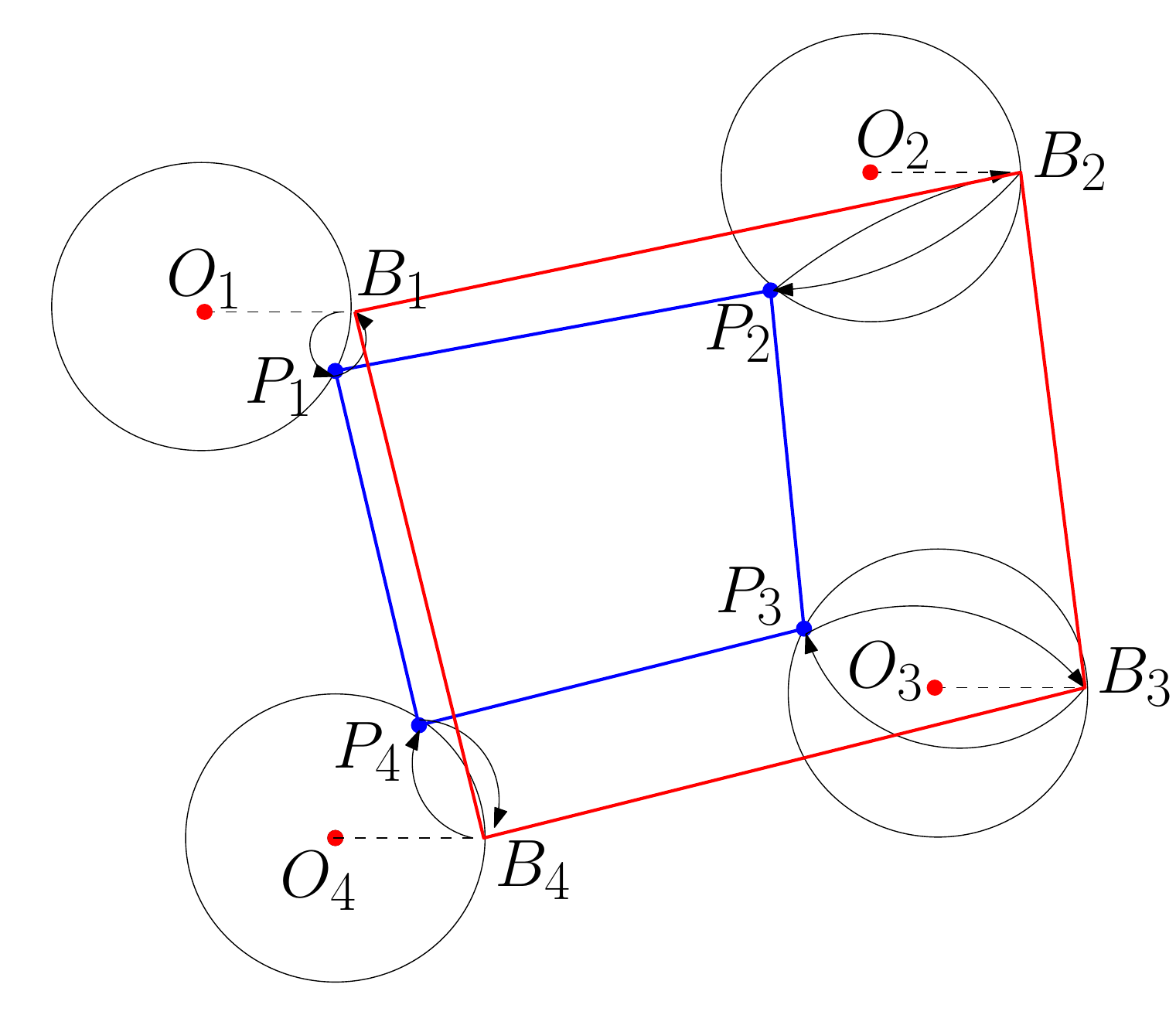}
	\caption[The translated view of the tour.]{The translated view, when the tour visits the same point on the boundary of each disk.}
	\label{fig:sub3}
\end{figure}


%

Formally, let $B_i$ be the point we obtain by translating the center $O_i$ along a fixed vector of length $R$. Then the TSP on the points $B_1,B_2,\ldots, B_n$ (in that order) has the same length as the TSP tour on $O_1, O_2, \ldots O_n$ (Figure~\ref{fig:sub3}). One advantage of visiting the first set of points (instead of the center points) is that it might be more similar geometrically to what the TSPN actually does. In terms of the following analysis, we would get that:
\begin{center}
 $|TSP^*| \leq |TSPN^*| + 2\cdot \displaystyle{\sum_{i=1}^{n}} |P_iB_i|$.
\end{center}

In this context, a natural question arises about the choice for the points $B_i$ that minimizes the term $\sum_{i=1}^{n} |P_iB_i|$. In order to see what this best choice would be, we transform this input instance into another one by essentially superimposing all the disks on top of each other (Figure~\ref{fig:fermat}(a)). Specifically, our new instance will consist of one disk of radius $R$ centered at a point $O$ such that the points $B_i$  map to a single point $B$ (corresponding to $O$ translated by the same fixed vector). We then map each point $P_i$ of the TSPN to a corresponding point $Q_i$ on the boundary of this disk such the vector $OQ_i$ is a translate of the vector $O_iP_i$. We then get that:
\begin{center}
 $\displaystyle{\sum_{i=1}^{n}} |P_iB_i| = \displaystyle{\sum_{i=1}^{n}} |Q_iB|$,
\end{center}
and so the best choice for $B$ is the one that minimizes the sum $\sum_{i=1}^{n} |Q_iB|$, otherwise know as the \textit{Fermat-Weber point} or \textit{$1$-median} of the points $Q_1, Q_2, \ldots, Q_n$~\cite{wesolowsky1993weber, weber1909ueber}. We note, however, that while the average distance to the Fermat-Weber point will never be greater than $2R$, there are instances in which this is tight. Consider, for example, the points $Q_i$ to be the vertices of a convex $2n$-gon and notice by triangle inequality that the center of the disk is exactly their Fermat-Weber point (any other point will incur distances greater than the sum of the diagonals). 

\begin{figure}
	\centering
	\renewcommand{\baselinestretch}{1}
	\small\normalsize
	\subfigure[Unified view when we translate each $O_i$ to the same point on the boundary.]{\includegraphics[height = 2in]{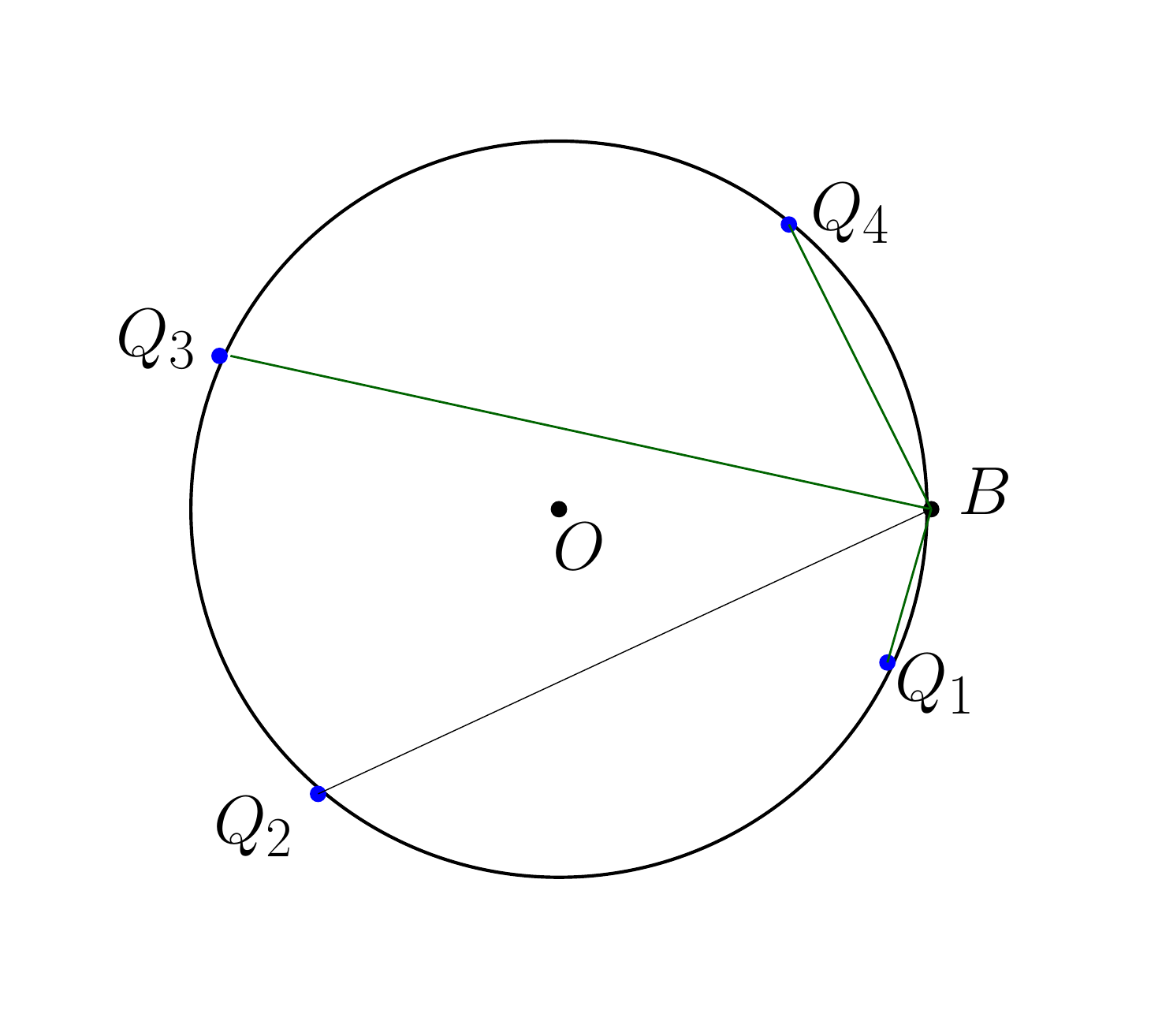}} 
	\hspace{1cm}
	\subfigure[$n$ different choices for $B$, when we choose a different $B$ for each pairs of points $O_i$ and $O_{i+1}$]{\includegraphics[height = 2in]{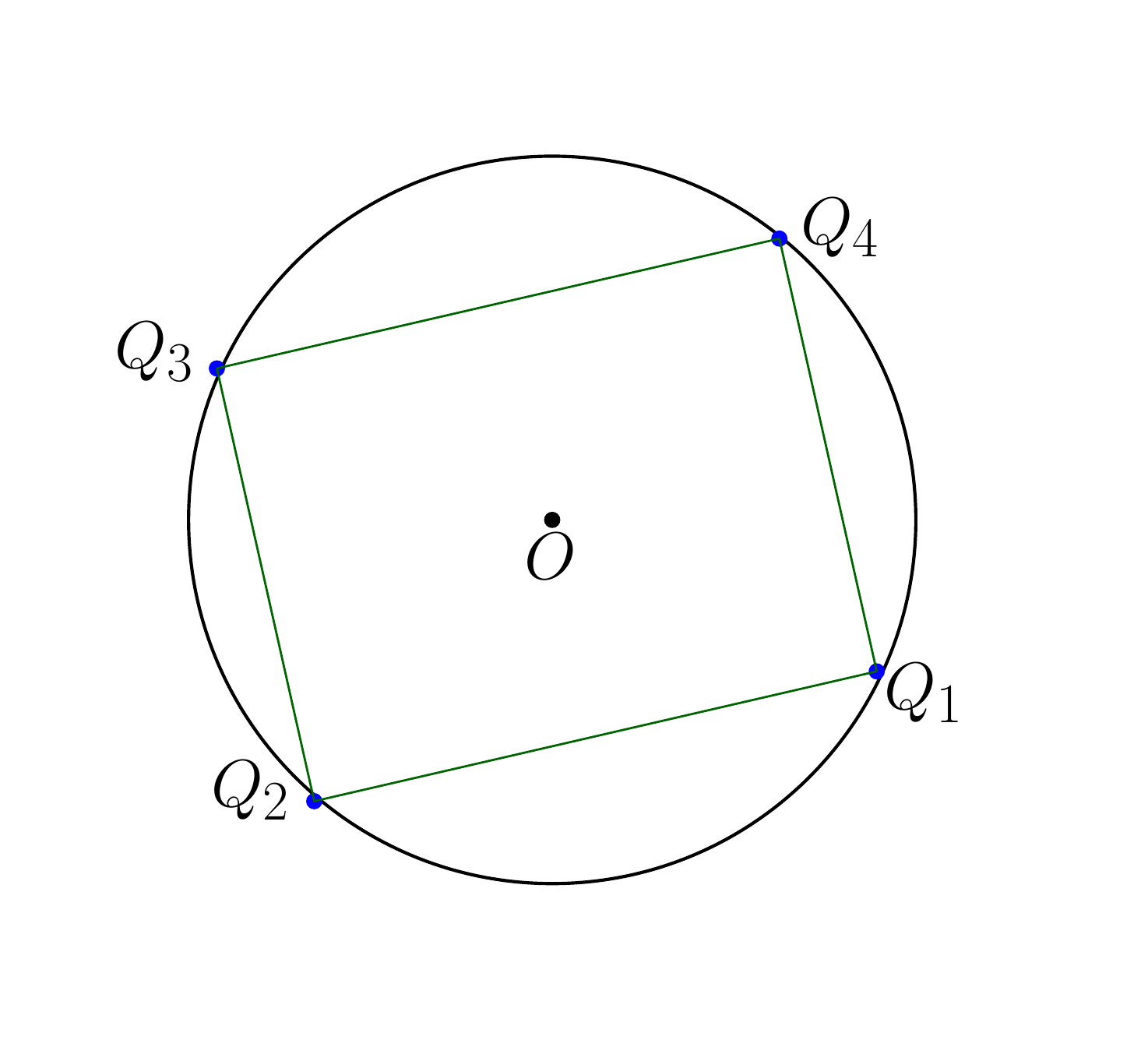}}
	\caption[Using Fermat-Weber points to improve the detour bound. ]{The detour bound depending on what kind of Fermat-Weber points we consider.}
	\label{fig:fermat}
\end{figure}

We can therefore say that when the points $B_i$ are evenly spaced on the boundary of the disk the Fermat-Weber point is exactly the center and so we gain no improvement by moving the centers $O_i$ towards the points $B_i$. It turns out, however, that the location of the points on the boundary is not as restrictive as the order in which the TSPN visits them. To see that, consider a different transformation in which we only move the centers $O_i$ and $O_{i+1}$ along a fixed vector. In other words, we choose a new vector for each pair of consecutive centers and only compare $|P_iP_{i+1}|$ locally against the newly obtained segment. This does not give us an overall valid tour on the centers, but it allows us to tailor the choice of $B$ for each two points $P_i$ and $P_{i+1}$. Specifically, we would get that:
\begin{center}
$|O_i O_{i+1}| \leq |P_iP_i+1| + |Q_iB| + |Q_{i+1}B|$.
\end{center}
In this case, we know that any point on the segment $Q_iQ_{i+1}$ minimizes the distances in question and so we get that:
\begin{center}
$|O_i O_{i+1}| \leq |P_iP_i+1| + |Q_iQ_{i+1}|$  and $|TSP^*| \leq |TSPN^*| + \displaystyle{\sum_{i=1}^n} |Q_iQ_{i+1}|$.
\end{center}
In other words, the largest detour obtained in this way is when the TSPN visits the points $P_i$ in the order of the Maximum TSP on the associated points $Q_i$ (Figure~\ref{fig:fermat}(b)). The case in which all the points are evenly distributed along the boundary no longer becomes that restrictive. We can still construct, however, instances for which the Max TSP is exactly $2Rn$ and that is when the points visited are exactly diametrically opposite each other. Nevertheless, we are able to show that for $n=3$, the detour is bounded by $3\sqrt{3}R$. Let $A, B, C$ be any three points on the boundary of a circle of disk $R$ centered at $O$. We then have that that  $|AB| + |AC| + |BC| \leq  3\sqrt{3}R$ and the H\"{a}me, Hyyti\"{a} and Hakula conjecture for $n=3$ follows:
\begin{theorem}\label{thm:triangle} For $n=3$, we have that any tour which visits the disks in an order $\sigma$ satisfies the bound
\begin{center}
$|TSP(\sigma)| \leq |TSPN(\sigma)| + 3\sqrt{3}R$.
\end{center}
\end{theorem}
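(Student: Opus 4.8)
The plan is to reduce the statement to a single planar inequality about triangles inscribed in a circle and then dispatch that inequality by convexity. The reduction is exactly the local Fermat--Weber argument developed immediately above: for each edge $P_iP_{i+1}$ of the three-disk tour, translating the centers $O_i$ and $O_{i+1}$ onto a common center $O$ sends $P_i,P_{i+1}$ to points $Q_i,Q_{i+1}$ on the boundary of a single disk of radius $R$, determined by $\overrightarrow{OQ_i}=\overrightarrow{O_iP_i}$, and choosing $B$ on the segment $Q_iQ_{i+1}$ yields $|O_iO_{i+1}|\le |P_iP_{i+1}|+|Q_iQ_{i+1}|$. The point to check is that $Q_i$ depends only on the direction $\overrightarrow{O_iP_i}$, and hence is the same whether inherited from the edge $P_{i-1}P_i$ or from $P_iP_{i+1}$; so the three points $Q_1,Q_2,Q_3$ are globally consistent and, since $|O_iP_i|=R$, all lie on the boundary circle of radius $R$. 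Summing the three local inequalities around the tour gives
\[
|TSP(\sigma)|\le |TSPN(\sigma)| + |Q_1Q_2| + |Q_2Q_3| + |Q_3Q_1|,
\]
so it suffices to bound the perimeter of the triangle $Q_1Q_2Q_3$ inscribed in a circle of radius $R$ by $3\sqrt{3}R$.

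For the inscribed-triangle bound I would parametrize by the three arcs $Q_1Q_2$, $Q_2Q_3$, $Q_3Q_1$, whose central angles $\theta_1,\theta_2,\theta_3$ are nonnegative and satisfy $\theta_1+\theta_2+\theta_3=2\pi$. A chord subtending a central angle $\theta$ has length $2R\sin(\theta/2)$, so writing $\alpha_i=\theta_i/2$ the perimeter equals $2R\bigl(\sin\alpha_1+\sin\alpha_2+\sin\alpha_3\bigr)$ with $\alpha_1+\alpha_2+\alpha_3=\pi$ and each $\alpha_i\in[0,\pi]$. Since $\sin$ is concave on $[0,\pi]$, Jensen's inequality gives $\tfrac13\sum_i\sin\alpha_i\le \sin\bigl(\tfrac13\sum_i\alpha_i\bigr)=\sin(\pi/3)=\tfrac{\sqrt3}{2}$, hence the perimeter is at most $2R\cdot\tfrac{3\sqrt3}{2}=3\sqrt3 R$, with equality exactly when $\alpha_1=\alpha_2=\alpha_3=\pi/3$, i.e.\ for the equilateral triangle. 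Combining this with the displayed reduction finishes the argument.

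I expect the only genuinely delicate point to be the reduction rather than the optimization: one must verify that the three touch-direction points $Q_1,Q_2,Q_3$ are well defined and consistent across the two edges incident to each vertex, and that degenerate configurations (collinear or coincident $Q_i$) are harmless. The degeneracies cause no trouble, because the chord-length parametrization and the concavity of $\sin$ remain valid on the entire closed simplex $\{\alpha_i\ge 0,\ \alpha_1+\alpha_2+\alpha_3=\pi\}$, so no separate boundary analysis is needed and the maximum $3\sqrt3R$ is attained in the interior at the equilateral configuration. An equivalent but more geometric route to the inscribed-triangle bound is a smoothing argument: if two arcs are unequal, relocating their shared vertex to bisect their combined arc weakly increases the perimeter, and iterating drives any configuration toward the equilateral one without ever decreasing the perimeter.
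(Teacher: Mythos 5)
Your proposal is correct and follows essentially the same route as the paper: the local translation (Fermat--Weber) reduction giving $|O_iO_{i+1}|\le |P_iP_{i+1}|+|Q_iQ_{i+1}|$ for points $Q_i$ on a common circle of radius $R$, followed by bounding the perimeter of the inscribed triangle $Q_1Q_2Q_3$ by $3\sqrt{3}R$. The only difference is that you actually prove the inscribed-triangle bound (via chord lengths and Jensen's inequality applied to the concavity of $\sin$), whereas the paper simply asserts it.
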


	\subparagraph*{Acknowledgements.}
	
	The author would like to thank Prof. Samir Khuller for suggesting the problem and for inspiring conversations on the topic.


	
	\small
	\bibliographystyle{abbrv}
	\bibliography{abbrv}

\begin{thebibliography}{10}

\bibitem{arkin2000approximation}
E.~M. Arkin, S.~P. Fekete, and J.~S. Mitchell.
\newblock Approximation algorithms for lawn mowing and milling.
\newblock {\em Computational Geometry}, 17(1-2):25--50, 2000.

\bibitem{arkin1994approximation}
E.~M. Arkin and R.~Hassin.
\newblock Approximation algorithms for the geometric covering salesman problem.
\newblock {\em Discrete Applied Mathematics}, 55(3):197--218, 1994.

\bibitem{arora1998polynomial}
S.~Arora.
\newblock Polynomial time approximation schemes for {Euclidean} traveling
  salesman and other geometric problems.
\newblock {\em Journal of the ACM (JACM)}, 45(5):753--782, 1998.

\bibitem{bhadauria2011robotic}
D.~Bhadauria, O.~Tekdas, and V.~Isler.
\newblock Robotic data mules for collecting data over sparse sensor fields.
\newblock {\em Journal of Field Robotics}, 28(3):388--404, 2011.

\bibitem{bhattacharya1992computing}
B.~Bhattacharya, J.~Czyzowicz, P.~Egyed, G.~Toussaint, I.~Stojmenovic, and
  J.~Urrutia.
\newblock Computing shortest transversals of sets.
\newblock {\em International Journal of Computational Geometry \&
  Applications}, 2(04):417--435, 1992.

\bibitem{bodlaender2009minimum}
H.~L. Bodlaender, C.~Feremans, A.~Grigoriev, E.~Penninkx, R.~Sitters, and
  T.~Wolle.
\newblock On the minimum corridor connection problem and other generalized
  geometric problems.
\newblock {\em Computational Geometry}, 42(9):939--951, 2009.

\bibitem{carrabs2017novel}
F.~Carrabs, C.~Cerrone, R.~Cerulli, and M.~Gaudioso.
\newblock A novel discretization scheme for the close enough traveling salesman
  problem.
\newblock {\em Computers \& Operations Research}, 78:163--171, 2017.

\bibitem{chan2011qptas}
T.-H.~H. Chan and K.~Elbassioni.
\newblock A {QPTAS} for {TSP} with fat weakly disjoint neighborhoods in
  doubling metrics.
\newblock {\em Discrete \& Computational Geometry}, 46(4):704--723, 2011.

\bibitem{chan2018reducing}
T.-H.~H. Chan and S.~H.-C. Jiang.
\newblock Reducing curse of dimensionality: Improved {PTAS} for {TSP} (with
  neighborhoods) in doubling metrics.
\newblock {\em ACM Transactions on Algorithms (TALG)}, 14(1):9, 2018.

\bibitem{chang2015artificial}
W.-L. Chang, D.~Zeng, R.-C. Chen, and S.~Guo.
\newblock An artificial bee colony algorithm for data collection path planning
  in sparse wireless sensor networks.
\newblock {\em International Journal of Machine Learning and Cybernetics},
  6(3):375--383, 2015.

\bibitem{citovsky2015exact}
G.~Citovsky, J.~Gao, J.~S. Mitchell, and J.~Zeng.
\newblock Exact and approximation algorithms for data mule scheduling in a
  sensor network.
\newblock In {\em International Symposium on Algorithms and Experiments for
  Sensor Systems, Wireless Networks and Distributed Robotics}, pages 57--70.
  Springer, 2015.

\bibitem{de2005tsp}
M.~de~Berg, J.~Gudmundsson, M.~J. Katz, C.~Levcopoulos, M.~H. Overmars, and
  A.~F. van~der Stappen.
\newblock Tsp with neighborhoods of varying size.
\newblock {\em Journal of Algorithms}, 57(1):22--36, 2005.

\bibitem{dumitrescu2003approximation}
A.~Dumitrescu and J.~S. Mitchell.
\newblock Approximation algorithms for {TSP} with neighborhoods in the plane.
\newblock {\em Journal of Algorithms}, 48(1):135--159, 2003.

\bibitem{dumitrescu2016traveling}
A.~Dumitrescu and C.~D. T{\'o}th.
\newblock The traveling salesman problem for lines, balls, and planes.
\newblock {\em ACM Transactions on Algorithms (TALG)}, 12(3):43, 2016.

\bibitem{dumitrescu2017constant}
A.~Dumitrescu and C.~D. T{\'o}th.
\newblock Constant-factor approximation for {TSP} with disks.
\newblock In {\em A Journey Through Discrete Mathematics}, pages 375--390.
  Springer, 2017.

\bibitem{elbassioni2009approximation}
K.~Elbassioni, A.~V. Fishkin, and R.~Sitters.
\newblock Approximation algorithms for the {Euclidean} traveling salesman
  problem with discrete and continuous neighborhoods.
\newblock {\em International Journal of Computational Geometry \&
  Applications}, 19(02):173--193, 2009.

\bibitem{galceran2013survey}
E.~Galceran and M.~Carreras.
\newblock A survey on coverage path planning for robotics.
\newblock {\em Robotics and Autonomous systems}, 61(12):1258--1276, 2013.

\bibitem{handbook}
J.~E. Goodman, J.~O'Rourke, and C.~D. T\'{o}th, editors.
\newblock {\em Handbook of Discrete and Computational Geometry, Third Edition}.
\newblock CRC Press, 2017.

\bibitem{gudmundsson1999fast}
J.~Gudmundsson and C.~Levcopoulos.
\newblock A fast approximation algorithm for {TSP} with neighborhoods.
\newblock {\em Nord. J. Comput.}, 6(4):469, 1999.

\bibitem{hame-eurocg-2011}
L.~H{\"a}me, E.~Hyyti{\"a}, and H.~Hakula.
\newblock {The Traveling Salesman Problem with Differential Neighborhoods}.
\newblock In {\em European Workshop on Computational Geometry ({EuroCG})},
  Morschach, Switzerland, Mar. 2011.

\bibitem{liu2013path}
J.-S. Liu, S.-Y. Wu, and K.-M. Chiu.
\newblock Path planning of a data mule in wireless sensor network using an
  improved implementation of clustering-based genetic algorithm.
\newblock In {\em Computational Intelligence in Control and Automation (CICA),
  2013 IEEE Symposium on}, pages 30--37. IEEE, 2013.

\bibitem{ma2013tour}
M.~Ma, Y.~Yang, and M.~Zhao.
\newblock Tour planning for mobile data-gathering mechanisms in wireless sensor
  networks.
\newblock {\em IEEE Transactions on Vehicular Technology}, 62(4):1472--1483,
  2013.

\bibitem{mata1995approximation}
C.~S. Mata and J.~S. Mitchell.
\newblock Approximation algorithms for geometric tour and network design
  problems.
\newblock In {\em Proceedings of the eleventh annual symposium on Computational
  geometry}, pages 360--369. ACM, 1995.

\bibitem{mitchell1999guillotine}
J.~S. Mitchell.
\newblock Guillotine subdivisions approximate polygonal subdivisions: A simple
  polynomial-time approximation scheme for geometric {TSP}, k-{MST}, and
  related problems.
\newblock {\em SIAM Journal on computing}, 28(4):1298--1309, 1999.

\bibitem{mitchell2007ptas}
J.~S. Mitchell.
\newblock A {PTAS} for {TSP} with neighborhoods among fat regions in the plane.
\newblock In {\em Proceedings of the eighteenth annual ACM-SIAM symposium on
  Discrete algorithms}, pages 11--18. Society for Industrial and Applied
  Mathematics, 2007.

\bibitem{mitchell2010constant}
J.~S. Mitchell.
\newblock A constant-factor approximation algorithm for {TSP} with
  pairwise-disjoint connected neighborhoods in the plane.
\newblock In {\em Proceedings of the twenty-sixth annual symposium on
  Computational geometry}, pages 183--191. ACM, 2010.

\bibitem{muller2011finding}
C.~L. M{\"u}ller.
\newblock Finding maximizing {Euclidean TSP} tours for the
  {H{\"a}me-Hyyti{\"a}-Hakula} conjecture.
\newblock Technical report, Technical Report CGL-TR-13, ETHZ, 2011.

\bibitem{safra2006complexity}
S.~Safra and O.~Schwartz.
\newblock On the complexity of approximating {TSP} with neighborhoods and
  related problems.
\newblock {\em Computational Complexity}, 14(4):281--307, 2006.

\bibitem{tekdas2012efficient}
O.~Tekdas, D.~Bhadauria, and V.~Isler.
\newblock Efficient data collection from wireless nodes under the two-ring
  communication model.
\newblock {\em The International Journal of Robotics Research}, 31(6):774--784,
  2012.

\bibitem{weber1909ueber}
A.~Weber.
\newblock {\em Ueber den standort der industrien}, volume~2.
\newblock 1909.

\bibitem{wesolowsky1993weber}
G.~O. Wesolowsky.
\newblock The {Weber} problem: History and perspectives.
\newblock {\em Computers \& Operations Research}, 1993.

\bibitem{yuan2007optimal}
B.~Yuan, M.~Orlowska, and S.~Sadiq.
\newblock On the optimal robot routing problem in wireless sensor networks.
\newblock {\em IEEE Transactions on Knowledge and Data Engineering},
  19(9):1252--1261, 2007.

\end{thebibliography}

\end{document}